\documentclass[12pt,a4paper]{article}
\usepackage{amssymb,amscd,amsmath,amsthm}
\usepackage[colorinlistoftodos]{todonotes}
\usepackage[colorlinks=true, allcolors=blue]{hyperref}
\usepackage{lmodern}
\usepackage[english]{babel}
\usepackage{color}
\usepackage{graphicx}
\usepackage{tikz}
 \usetikzlibrary{shapes,arrows,fit}

 \usepackage{algpseudocode}
 \usepackage{algorithmicx}
 \usepackage{algorithm}
 \usepackage[lmargin=0.7in,rmargin=0.7in,tmargin=0.65in,bmargin=0.5in,footskip=0.25in]{geometry}

\usetikzlibrary{positioning, calc, arrows.meta}
\newtheorem{theorem}{Theorem}[section]

\newtheorem{corollary}[theorem]{Corollary}
\newtheorem{lemma}[theorem]{Lemma}

\newtheorem{remark}[theorem]{Remark}

\begin{document}

\title{Cocycle Actions on Hidden Quantum Markov Models: Symmetry Protection and Topological Order}

\author{
    Abdessatar Souissi \\
    \small $^1$ Department of Management Information Systems, College of Business and Economics \\
    \small Qassim University, Buraydah 51452, Saudi Arabia \\
\small $^2$ Institut Préparatoire aux Etudes des Ingénieurs de Monastir, Tunisia.
    \small \texttt{a.souaissi@qu.edu.sa}
    \and
    Abdessatar Barhoumi \\
    \small Department of Mathematics and Statistics, College of Science \\
    \small King Faisal University, Al-Ahsa PO.Box: 400, 31982, Saudi Arabia \\
    \small \texttt{abarhoumi@kfu.edu.sa}
}
\vskip0.5cm

\date{}
\maketitle

\begin{abstract}
We develop a symmetry action framework for hidden quantum Markov models (HQMMs) tailored to one-dimensional quantum spin systems and symmetry-protected topological (SPT) phases. In our setting, a symmetry group $G$ acts projectively on the hidden (virtual) degrees of freedom and linearly on the physical observation space, yielding a global HQMM state that is invariant under the combined action of $G$ for both conventional and causal (input--output) structures. We show that such symmetry actions are naturally classified by a group-cohomology $2$-cocycle $[\omega] \in H^{2}(G,\mathrm{U}(1))$, in direct analogy with the standard cohomological classification of one-dimensional bosonic SPT phases via projective edge representations. As an explicit example, we apply this construction to the Affleck--Kennedy--Lieb--Tasaki (AKLT) chain, where the hidden layer carries a nontrivial class $[\omega] \in H^{2}(\mathrm{SO}(3),\mathrm{U}(1))$ encoding its SPT order. In this case the HQMM formalism reproduces the known SPT properties of the AKLT state while providing a stochastic, Markovian description of the underlying virtual dynamics. Our results establish HQMMs as a natural bridge between quantum stochastic processes, tensor-network descriptions of many-body systems, and symmetry-protected topological order.
\end{abstract}
\vspace{0.3cm}

\noindent\textbf{Keywords:} Hidden quantum Markov models; symmetry-protected topological order; projective representations;   AKLT state;   quantum theory.

\vspace{0.2cm}

\noindent\textbf{2020 Mathematics Subject Classification:} 46L53, 81P45, 81R15, 46L55, 20J06.

\section{Introduction}
Symmetry-protected topological (SPT) phases provide a unifying framework for a broad class of quantum spin systems in which global symmetries protect robust edge modes, quantized topological invariants, and nontrivial entanglement structures, even in one-dimensional settings \cite{H83,Pollmann2010,Pollmann2012}. Within this picture, the Haldane phase of spin‑1 chains and its paradigmatic representative, the AKLT model, furnish canonical examples of bosonic SPT phases protected by spin-rotation and related symmetries \cite{H83,AKLT1987,Affleck1988}. Building on the index-theoretic and group‑cohomological classification of 1D SPT phases, these systems are now understood in terms of projective edge representations and cohomology classes such as \([ \omega ] \in H^{2}(G,\mathrm{U}(1))\), and they are increasingly viewed as resources for quantum information processing and computational power \cite{O21,O23,Tas23,Chen2011a,Chen2011b,D17}. Recent developments have extended this perspective to open and driven quantum systems, where SPT order persists under symmetry-preserving noisy dynamics and gives rise to refined notions of order and protected structure in non‑equilibrium settings \cite{GTS22,D20,Sati26}.  

In parallel, hidden quantum Markov models (HQMMs) have emerged as a rigorous operator‑algebraic framework for modeling structured quantum dynamics with latent degrees of freedom, extending classical hidden Markov models to the noncommutative setting \cite{AGLS24Q,AGLS24C}. In this formalism, completely positive maps and quantum channels govern the virtual evolution, while projective or generalized measurements define the observed output processes \cite{choi1975completely,OP87,Partha12}. This point of view has recently been revitalized by applications to quantum information and quantum machine learning, where HQMMs support new paradigms for modeling and learning temporal quantum data, including split HQMM architectures and the operational analysis of quantum memory \cite{ET,LiXi24,Vo25}. Strikingly, it has been shown that paradigmatic many‑body states such as finitely correlated states and matrix product states can be realized as observation processes of HQMMs, thereby providing a stochastic and channel‑based perspective on tensor‑network ground states \cite{fannes1992finitely,fannes2,SouAndhqmm2025}. In particular, the AKLT state—realizing the Haldane phase as an \(\mathrm{SO}(3)\)-protected SPT phase—admits such an HQMM representation, and more generally fits into a cohomological framework where symmetry actions on the hidden layer encode the underlying topological order \cite{AKLT1987,Affleck1988,SouAndhqmm2025,Ragone24}.  

The present work is motivated by this convergence between SPT order and HQMMs. On the one hand, the modern classification of 1D SPT phases via indices, projective representations, and cohomology provides a robust conceptual and technical toolkit for understanding phases such as the Haldane phase and its generalizations \cite{O21,O23,Tas23,Xi18}. On the other hand, the operator‑algebraic theory of quantum Markov processes and their cocycles suggests that HQMMs naturally carry symmetry actions and cohomological data at the level of their hidden (virtual) dynamics \cite{AGLS24Q,AGLS24C,Soza19,Vo25}. Our aim is to systematically develop a symmetry‑action framework for HQMMs that (i) recovers known SPT features of the AKLT and related Haldane‑type chains and (ii) recasts SPT phases as equivalence classes of cocycle actions on quantum Markov models, thereby establishing a new step towards a rigorous bridge between quantum stochastic processes, tensor‑network descriptions, and the topological order of quantum phases \cite{Chen2011a,Chen2011b,D17,SouBarhqmm2026,SouAndhqmm2025}.

In the present paper, we develop a comprehensive algebraic framework of symmetry actions on HQMMs, focusing on the interplay between projective representations of a symmetry group and the causal structure of the underlying quantum Markov chain. Let $G$ be a locally compact group. The hidden sector is described by a separable Hilbert space $\mathcal{H}$ carrying a projective unitary representation $\pi: G \to \mathcal{U}(\mathcal{H})$ with associated $2$-cocycle $\omega \in H^2(G, \mathrm{U}(1))$, while the observable sector is described by a Hilbert space $\mathcal{K}$ carrying a linear unitary representation $\rho: G \to \mathcal{U}(\mathcal{K})$ corresponding to the trivial cocycle. The algebraic actions are implemented via adjoint actions $\alpha_g^{(H)} = \operatorname{Ad}(\pi(g))$ on $\mathcal{A}_H = \mathcal{B}(\mathcal{H})$ and $\beta_g = \operatorname{Ad}(\rho(g))$ on $\mathcal{A}_O = \mathcal{B}(\mathcal{K})$, extended to the quasi-local algebras $\mathcal{A}_{H,\mathbb{N}_0}$ and $\mathcal{A}_{O,\mathbb{N}_0}$ by tensor product structure \cite{OP87, Partha12}.

The HQMM is defined by a generative triple $(\phi_0, \mathcal{E}_H, \mathcal{E}_{H,O})$, where $\phi_0$ is an initial state on $\mathcal{A}_H$, $\mathcal{E}_H: \mathcal{A}_H \otimes \mathcal{A}_H \to \mathcal{A}_H$ is a completely positive unital map \cite{choi1975completely} governing the hidden transition, and $\mathcal{E}_{H,O}: \mathcal{A}_H \otimes \mathcal{A}_O \to \mathcal{A}_H$ is a completely positive unital map governing the emission of observables. The symmetry action is encoded in three fundamental conditions: invariance of the initial state, equivariance of the hidden transition map, and covariance of the emission map. The latter constitutes the central structural relation, asserting that $\mathcal{E}_{H,O}$ intertwines the projective $G$-action on $\mathcal{A}_H$ with the linear $G$-action on $\mathcal{A}_O$, thereby absorbing the projective anomaly at the interface between hidden and observable sectors.

We analyze two distinct causal structures arising from different orderings of the hidden transition and emission maps: the conventional structure $\mathcal{T}^{(\text{conv})}$ corresponding to a measurement-then-evolution paradigm, and the causal structure $\mathcal{T}^{(\text{caus})}$ corresponding to an evolution-then-measurement paradigm. For each structure, we introduce sliced maps $\mathcal{T}_{X,Y}^{(\text{conv})}, \mathcal{T}_{X,Y}^{(\text{caus})}: \mathcal{A}_H \to \mathcal{A}_H$ that propagate the hidden state forward, and establish their covariance properties under the symmetry action.

The main result, Theorem \ref{thm:global_invariance}, demonstrates that under the symmetry conditions \eqref{eq:initial_invariance}, \eqref{eq:Eh_equivariance}, and \eqref{eq:EOH_covariance}, the global HQMM states $\varphi^{(\text{conv})}$ and $\varphi^{(\text{caus})}$ on the quasi-local algebra $\mathcal{A}_{\mathbb{N}_0}$—constructed as projective limits of finite-volume states—are invariant under the combined projective-linear action of $G$. Consequently, the hidden marginals are invariant under the projective action $\alpha^{(H)}$, and the observable marginals are invariant under the linear action $\beta$.

This result is particularly significant for the study of symmetry-protected topological order. The AKLT state \cite{AKLT1987, Affleck1988}—a prototypical SPT phase protected by $\mathrm{SO}(3)$ symmetry \cite{H83, D20}—is realized as the observation process of a causal HQMM \cite{SouBarhqmm2026}, where the hidden space $\mathcal{H}_{1/2}$ carries the irreducible projective spin-$1/2$ representation $\pi$ with cocycle $[\omega] \in H^2(\mathrm{SO}(3), \mathrm{U}(1)) \cong \mathbb{Z}_2$, and the observable space $\mathcal{H}_1$ carries the linear spin-$1$ representation $\rho$. The emission map is based on  the AKLT tensors, while the  hidden transition  is only required to be equivariant with respect the representation $\pi$ in particular it can be taken a partial trace. In this case however the underlying  entanglement structure is trivial, the process allows to reproduce the AKLT finitely correlated state as observation state as shown in our previous work \cite{SouBarhqmm2026}. The initial state is uniquely determined by symmetry as the normalized trace via Schur's lemma \cite{OP87}. Theorem \ref{thm:global_invariance} therefore implies global $\mathrm{SO}(3)$-invariance of the associated HQMM state, with the nontrivial cocycle $[\omega]$ serving as a topological invariant characterizing the SPT phase \cite{Pollmann2010, Pollmann2012, Tas23}.

Beyond the AKLT example, our approach establishes a rigorous operator-algebraic foundation for the classification of SPT phases within the HQMM framework \cite{Soza19, Vo25}. The two causal structures analyzed are natural candidates for modeling distinct physical scenarios, and their symmetry properties  can be extended to more complex  systems \cite{OP87}. Moreover, the algebraic framework extends naturally to higher dimensions through  a convenient framework of hidden quantum Markov fields, opening avenues for studying SPT order in  general matrix product states framework \cite{D20}. Further works will pursue a systematic hidden-memory approach to SPT order, exploring their classification and applications to quantum memory and learning algorithms.

The remainder of this paper is organized as follows.  Section \ref{Sect_main} introduces projective symmetry actions on HQMMs via a $2$-cocycle, formulates the invariance, equivariance, and covariance conditions, and proves that these imply global invariance of the HQMM state for both causal structures (Theorem \ref{thm:global_invariance}). Section \ref{Sect_AKLT} applies this framework to the AKLT state. Section \ref{Sect_Discussion} concludes with perspectives on classification, higher dimensions, and applications. Appedix \ref{sect_prel} recalls the operator-algebraic construction of HQMMs and defines the two causal structures.

\section{Cocycle Actions and Symmetry Protection in HQMMs}\label{Sect_main}
Let \(G\) be a locally compact group. In quantum systems, symmetry actions are typically implemented by unitary representations. However, within the framework ofHQMMs, the hidden sector may carry a \textit{projective} representation, reflecting the anomalous nature of the symmetry in the corresponding symmetry-protected topological (SPT) phase. This projective character is encoded in a \(2\)-cocycle, which defines an element of the second cohomology group \(H^2(G, \mathrm{U}(1))\). Throughout the remainder of the paper, the notation associated with the HQMM formalism is introduced in Appendix \ref{sect_prel}.

A \textit{projective unitary representation} of \(G\) on a Hilbert space \(\mathcal{H}\) is a map \(\pi: G \to \mathcal{U}(\mathcal{H})\) such that for all \(g,h \in G\),
\begin{equation}
\pi(g)\pi(h) = \omega(g,h) \, \pi(gh),
\label{eq:proj_repr}
\end{equation}
where \(\omega: G \times G \to \mathrm{U}(1)\) is a measurable function satisfying the \(2\)-cocycle condition:
\begin{equation}
\omega(g,h)\omega(gh,k) = \omega(g,hk)\omega(h,k), \qquad \forall g,h,k \in G.
\label{eq:cocycle_condition}
\end{equation}
The cocycle \(\omega\) is called the \textit{multiplier} of the projective representation. Two projective representations \(\pi\) and \(\pi'\) are equivalent if there exists a function \(\lambda: G \to \mathrm{U}(1)\) such that \(\pi'(g) = \lambda(g)\pi(g)\), which induces a cohomologous cocycle \(\omega'(g,h) = \lambda(g)\lambda(h)\overline{\lambda(gh)}\omega(g,h)\). The equivalence class \([\omega] \in H^2(G, \mathrm{U}(1))\) is a topological invariant characterizing the projective representation.

Let \(\mathcal{H}\) and \(\mathcal{K}\) be the separable Hilbert spaces introduced in Section 1, representing the hidden and observable degrees of freedom, respectively. Let \(\pi: G \to \mathcal{U}(\mathcal{H})\) be a projective representation of \(G\) on the hidden space with associated \(2\)-cocycle \(\omega_\pi \in H^2(G, \mathrm{U}(1))\), and let \(\rho: G \to \mathcal{U}(\mathcal{K})\) be an ordinary (linear) unitary representation on the observable space, corresponding to the trivial cocycle \(\omega_\rho(g,h) = 1\).

The algebraic action of \(G\) on the single-site algebras \(\mathcal{A}_H = \mathcal{B}(\mathcal{H})\) and \(\mathcal{A}_O = \mathcal{B}(\mathcal{K})\) is defined via the adjoint action:
\begin{equation}
\alpha_g^{(H)}(X) := \pi(g) X \pi(g)^\dagger, \qquad \beta_g(Y) := \rho(g) Y \rho(g)^\dagger,
\label{eq:adjoint_actions}
\end{equation}
for all \(X \in \mathcal{A}_H\), \(Y \in \mathcal{A}_O\), and \(g \in G\). These actions extend naturally to the quasi-local algebras \(\mathcal{A}_{H,\mathbb{N}_0}\) and \(\mathcal{A}_{O,\mathbb{N}_0}\) via the tensor product structure, i.e., for elementary tensors,
\begin{equation}
\alpha_g^{(H)}\Big(\bigotimes_{k=0}^n X_k\Big) = \bigotimes_{k=0}^n \pi(g) X_k \pi(g)^\dagger, \qquad
\beta_g\Big(\bigotimes_{k=0}^n Y_k\Big) = \bigotimes_{k=0}^n \rho(g) Y_k \rho(g)^\dagger.
\label{eq:tensor_actions}
\end{equation}
The projective nature of \(\pi\) manifests in the composition law:
\begin{equation}
\alpha_g^{(H)} \circ \alpha_h^{(H)} = \operatorname{Ad}\bigl(\omega_\pi(g,h) \mathbb{I}_{\mathcal{H}}\bigr) \circ \alpha_{gh}^{(H)},
\label{eq:proj_composition}
\end{equation}
which reflects the fact that \(\alpha^{(H)}\) is an action up to inner automorphisms implemented by the scalar cocycle \(\omega_\pi\). For the linear representation \(\rho\), the corresponding action \(\beta\) satisfies \(\beta_g \circ \beta_h = \beta_{gh}\) without any phase.

We now formulate the precise symmetry requirements that define a covariant HQMM. These conditions encapsulate how the projective symmetry acts on the generative structure.

\begin{equation}
\phi_0\bigl( \pi(g) Z \pi(g)^\dagger \bigr) = \phi_0(Z) \qquad \forall g \in G,\; \forall Z \in \mathcal{A}_H.
\label{eq:initial_invariance}
\end{equation}

\begin{equation}
\mathcal{E}_H\bigl( \pi(g) Z_1 \pi(g)^\dagger \otimes \pi(g) Z_2 \pi(g)^\dagger \bigr) = \pi(g) \, \mathcal{E}_H(Z_1 \otimes Z_2) \, \pi(g)^\dagger \qquad \forall g \in G,\; \forall Z_1, Z_2 \in \mathcal{A}_H.
\label{eq:Eh_equivariance}
\end{equation}

\begin{equation}
\mathcal{E}_{H,O}\bigl( \pi(g) X \pi(g)^\dagger \otimes \rho(g) Y \rho(g)^\dagger \bigr) = \pi(g) \, \mathcal{E}_{H,O}(X \otimes Y) \, \pi(g)^\dagger \qquad \forall g \in G,\; \forall X \in \mathcal{A}_H,\; \forall Y \in \mathcal{A}_O.
\label{eq:EOH_covariance}
\end{equation}

The covariance condition \eqref{eq:EOH_covariance} is the central structural relation. It states that $\mathcal{E}_{O,H}$ intertwines the projective $G$-action on $\mathcal{A}_H$ with the linear $G$-action on $\mathcal{A}_O$, i.e., $\mathcal{E}_{O,H} \circ (\alpha_g^{(H)} \otimes \beta_g) = \alpha_g^{(H)} \circ \mathcal{E}_{O,H}$. This intertwining absorbs the projective anomaly of the hidden sector, forcing the observable sector to transform under an ordinary representation.

A central extension perspective illuminates the cohomological significance of these conditions. Consider the short exact sequence
\[
1 \longrightarrow \mathrm{U}(1) \longrightarrow \widetilde{G} \longrightarrow G \longrightarrow 1,
\]
where \(\widetilde{G}\) is the central extension of \(G\) by \(\mathrm{U}(1)\) determined by the cocycle \(\omega\). Then \(\pi\) lifts to a linear representation \(\widetilde{\pi}\) of \(\widetilde{G}\) on \(\mathcal{H}\) satisfying \(\widetilde{\pi}((g,\lambda)) = \lambda \pi(g)\) for \(\lambda \in \mathrm{U}(1)\). The emission map covariance condition \eqref{eq:EOH_covariance} then becomes
\[
\mathcal{E}_{H,O}\bigl( \widetilde{\pi}((g,\lambda)) X \widetilde{\pi}((g,\lambda))^\dagger \otimes \rho(g) Y \rho(g)^\dagger \bigr) = \widetilde{\pi}((g,\lambda)) \, \mathcal{E}_{H,O}(X \otimes Y) \, \widetilde{\pi}((g,\lambda))^\dagger,
\]
where the \(\mathrm{U}(1)\) phase \(\lambda\) cancels on both sides. Hence the emission map provides a \textit{cocycle intertwiner} between the centrally extended hidden dynamics and the linear observable dynamics. The existence of such an intertwiner implies that the obstruction \([\omega] \in H^2(G, \mathrm{U}(1))\) is trivialized when passing through the map \(\mathcal{E}_{H,O}\), meaning that the composition \(\mathcal{E}_{H,O} \circ (\cdot \otimes \operatorname{id}_{\mathcal{A}_O})\) converts the projective action on the first tensor factor into a linear action on the image.

The interplay between projective and linear representations manifests richly in the tensor product structure of the HQMM. Consider the tensor product representation \(\pi \otimes \rho: G \to \mathcal{U}(\mathcal{H} \otimes \mathcal{K})\) defined by \((\pi \otimes \rho)(g) := \pi(g) \otimes \rho(g)\). Its composition law follows from \eqref{eq:proj_repr}:
\[
(\pi \otimes \rho)(g)(\pi \otimes \rho)(h) = \pi(g)\pi(h) \otimes \rho(g)\rho(h) = \omega(g,h) \, \pi(gh) \otimes \rho(gh) = \omega(g,h) \, (\pi \otimes \rho)(gh).
\]
Thus \(\pi \otimes \rho\) is a projective representation of \(G\) with the same cocycle \(\omega\) as \(\pi\). Consequently, the composite action \(\alpha_g^{(H)} \otimes \beta_g\) on \(\mathcal{A}_H \otimes \mathcal{A}_O\) satisfies
\[
(\alpha_g^{(H)} \otimes \beta_g) \circ (\alpha_h^{(H)} \otimes \beta_h) = \operatorname{Ad}\bigl(\omega(g,h) \mathbb{I}_{\mathcal{H} \otimes \mathcal{K}}\bigr) \circ (\alpha_{gh}^{(H)} \otimes \beta_{gh}),
\]
so the projective nature persists at the level of the composite system. The covariance condition \eqref{eq:EOH_covariance} exactly cancels the projective phase coming from the hidden sector. As a result, although $\alpha^{(H)}$ itself is projective, its composition with $\mathcal{E}_{O,H}$ becomes linear. This cancellation is the algebraic hallmark of a consistent, anomaly-free coupling between the hidden and observable sectors at their interface.

More generally, take two projective representations $\pi_1$ and $\pi_2$ of $G$, with cocycles $\omega_1$ and $\omega_2$ respectively. Their tensor product $\pi_1 \otimes \pi_2$ is again projective, with cocycle $\omega_1 \omega_2$ given by pointwise multiplication.
 Their tensor product \(\pi_1 \otimes \pi_2\) satisfies
\[
(\pi_1 \otimes \pi_2)(g)(\pi_1 \otimes \pi_2)(h) = \omega_1(g,h)\omega_2(g,h) \, (\pi_1 \otimes \pi_2)(gh),
\]
so the cocycle of the tensor product is the pointwise product \(\omega_1\omega_2\). In the HQMM setting, the hidden transition map $\mathcal{E}_H$ sends the tensor product of two projective representations $\pi \otimes \pi$ (cocycle $\omega^2$) to a single copy $\pi$ (cocycle $\omega$). Equivariance \eqref{eq:Eh_equivariance} then forces $[\omega^2] = [\omega]$ in $H^2(G, \mathrm{U}(1))$. This holds automatically only when $\omega$ is a coboundary or $G$ has special properties; otherwise, the equivariance must be twisted. In the generic case where \([\omega]\) has order greater than \(1\), this condition forces a refinement of the equivariance structure, typically implemented by a \(2\)-cocycle on the map itself, leading to the notion of \textit{twisted equivariance} that we shall explore in subsequent sections.

We now turn to the analysis of the two causal structures introduced in Section 1. Recall that the conventional and causal structures are defined by \eqref{Tconv} and \eqref{Tcaus}, respectively in terms of the generative mappings \((\mathcal{E}_H, \mathcal{E}_{H,O})\) as:

For each structure, we define the corresponding sliced maps that propagate the hidden state forward. For the conventional structure, define \(\mathcal{T}_{X,Y}^{(\text{conv})}: \mathcal{A}_H \to \mathcal{A}_H\) by
\begin{equation}
\mathcal{T}_{X,Y}^{(\text{conv})}(Z) := \mathcal{T}^{(\text{conv})}(X \otimes Z \otimes Y) = \mathcal{E}_H\big( \mathcal{E}_{H,O}(X \otimes Y) \otimes Z \big).
\label{eq:conv_sliced}
\end{equation}
For the causal structure, define \(\mathcal{T}_{X,Y}^{(\text{caus})}: \mathcal{A}_H \to \mathcal{A}_H\) by
\begin{equation}
\mathcal{T}_{X,Y}^{(\text{caus})}(Z) := \mathcal{T}^{(\text{caus})}(X \otimes Z \otimes Y) = \mathcal{E}_{H,O}\big( \mathcal{E}_H(X \otimes Z) \otimes Y \big).
\label{eq:caus_sliced}
\end{equation}

These sliced maps encapsulate the sequential dynamics: given a current hidden state \(Z\) and an incoming pair \((X,Y)\) representing the left neighbor hidden operator and the local observable, they produce the updated hidden state after the site.

\begin{lemma}\label{lem:conv_cov}
Under the symmetry conditions \eqref{eq:Eh_equivariance} and \eqref{eq:EOH_covariance}, for all \(g \in G\), \(X, Z \in \mathcal{A}_H\), and \(Y \in \mathcal{A}_O\),
\begin{equation}
\mathcal{T}_{\pi(g) X \pi(g)^\dagger, \, \rho(g) Y \rho(g)^\dagger}^{(\text{conv})}\big( \pi(g) Z \pi(g)^\dagger \big) = \pi(g) \, \mathcal{T}_{X,Y}^{(\text{conv})}(Z) \, \pi(g)^\dagger.
\label{eq:conv_covariance}
\end{equation}
\end{lemma}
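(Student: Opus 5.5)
The plan is to verify \eqref{eq:conv_covariance} by direct substitution into the definition \eqref{eq:conv_sliced}, applying the two symmetry conditions one after the other, from the innermost map outward. Fix $g\in G$, $X,Z\in\mathcal{A}_H$ and $Y\in\mathcal{A}_O$. Write $X_g:=\pi(g)X\pi(g)^\dagger$, $Z_g:=\pi(g)Z\pi(g)^\dagger$ and $Y_g:=\rho(g)Y\rho(g)^\dagger$. By \eqref{eq:conv_sliced}, the left-hand side of \eqref{eq:conv_covariance} equals
\[
\mathcal{E}_H\big(\mathcal{E}_{H,O}(X_g\otimes Y_g)\otimes Z_g\big).
\]

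The first step treats the inner emission. The covariance condition \eqref{eq:EOH_covariance} gives
\[
\mathcal{E}_{H,O}(X_g\otimes Y_g)=\pi(g)\,W\,\pi(g)^\dagger, \qquad W:=\mathcal{E}_{H,O}(X\otimes Y)\in\mathcal{A}_H .
\]
The argument of $\mathcal{E}_H$ is therefore $\pi(g)W\pi(g)^\dagger\otimes\pi(g)Z\pi(g)^\dagger$. In this expression both tensor factors are conjugated by the same unitary $\pi(g)$.

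The second step applies the equivariance \eqref{eq:Eh_equivariance} with $Z_1=W$ and $Z_2=Z$. This gives
\[
\mathcal{E}_H\big(\pi(g)W\pi(g)^\dagger\otimes\pi(g)Z\pi(g)^\dagger\big)=\pi(g)\,\mathcal{E}_H(W\otimes Z)\,\pi(g)^\dagger .
\]
By \eqref{eq:conv_sliced}, the right-hand side is exactly $\pi(g)\,\mathcal{T}^{(\text{conv})}_{X,Y}(Z)\,\pi(g)^\dagger$, which is the claim.

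I do not expect a genuine obstacle. The only point to check is that no cocycle phases appear, and this holds because every transformation is an adjoint action. A scalar $\omega(g,h)$ cancels under $\operatorname{Ad}$ (cf.\ \eqref{eq:proj_composition}), and in any case only a single group element $g$ is used throughout. The hypotheses are also used only on elementary tensors. This suffices because \eqref{eq:conv_sliced} is itself defined on elementary tensors, and extending the identity to general elements of $\mathcal{A}_H\otimes\mathcal{A}_O$ would follow by linearity, since both sides are linear in $X\otimes Y$.
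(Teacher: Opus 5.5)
Your proposal is correct and follows the paper's own proof step for step. You unfold \eqref{eq:conv_sliced}, apply the emission covariance \eqref{eq:EOH_covariance} to the inner map, and then apply the equivariance \eqref{eq:Eh_equivariance} with $Z_1=\mathcal{E}_{H,O}(X\otimes Y)$ and $Z_2=Z$.
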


\begin{proof}
The proof follows directly from the covariance property \eqref{eq:EOH_covariance} of \(\mathcal{E}_{H,O}\) and the equivariance property \eqref{eq:Eh_equivariance} of \(\mathcal{E}_H\):
\[
\begin{aligned}
&\mathcal{T}_{\pi(g) X \pi(g)^\dagger, \, \rho(g) Y \rho(g)^\dagger}^{(\text{conv})}\big( \pi(g) Z \pi(g)^\dagger \big) \\
&\quad = \mathcal{E}_H\Big( \mathcal{E}_{H,O}\big( \pi(g) X \pi(g)^\dagger \otimes \rho(g) Y \rho(g)^\dagger \big) \otimes \pi(g) Z \pi(g)^\dagger \Big) \\
&\quad \stackrel{ {\eqref{eq:EOH_covariance}}}{=} \mathcal{E}_H\Big( \pi(g) \mathcal{E}_{H,O}(X \otimes Y) \pi(g)^\dagger \otimes \pi(g) Z \pi(g)^\dagger \Big) \\
&\quad \stackrel{ {\eqref{eq:Eh_equivariance}}}{=} \pi(g) \, \mathcal{E}_H\big( \mathcal{E}_{H,O}(X \otimes Y) \otimes Z \big) \, \pi(g)^\dagger \\
&\quad = \pi(g) \, \mathcal{T}_{X,Y}^{(\text{conv})}(Z) \, \pi(g)^\dagger.
\end{aligned}
\]
\end{proof}

\begin{lemma}\label{lem:caus_cov}
Under the symmetry conditions \eqref{eq:Eh_equivariance} and \eqref{eq:EOH_covariance}, for all \(g \in G\), \(X, Z \in \mathcal{A}_H\), and \(Y \in \mathcal{A}_O\),
\begin{equation}
\mathcal{T}_{\pi(g) X \pi(g)^\dagger, \, \rho(g) Y \rho(g)^\dagger}^{(\text{caus})}\big( \pi(g) Z \pi(g)^\dagger \big) = \pi(g) \, \mathcal{T}_{X,Y}^{(\text{caus})}(Z) \, \pi(g)^\dagger.
\label{eq:caus_covariance}
\end{equation}
\end{lemma}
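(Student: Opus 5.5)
The argument mirrors the proof of Lemma \ref{lem:conv_cov}, with the two symmetry conditions applied in the reverse order because the causal slice \eqref{eq:caus_sliced} applies $\mathcal{E}_H$ first and $\mathcal{E}_{H,O}$ second. I start from the left-hand side of \eqref{eq:caus_covariance} and unfold the definition:
\[
\mathcal{T}_{\pi(g) X \pi(g)^\dagger, \, \rho(g) Y \rho(g)^\dagger}^{(\text{caus})}\big( \pi(g) Z \pi(g)^\dagger \big) = \mathcal{E}_{H,O}\Big( \mathcal{E}_H\big( \pi(g) X \pi(g)^\dagger \otimes \pi(g) Z \pi(g)^\dagger \big) \otimes \rho(g) Y \rho(g)^\dagger \Big).
\]

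First, I apply the equivariance \eqref{eq:Eh_equivariance} to the inner map, taking $Z_1 = X$ and $Z_2 = Z$. This replaces the inner expression by $\pi(g)\,\mathcal{E}_H(X \otimes Z)\,\pi(g)^\dagger$. The argument of $\mathcal{E}_{H,O}$ then has the form $\pi(g) W \pi(g)^\dagger \otimes \rho(g) Y \rho(g)^\dagger$, where $W := \mathcal{E}_H(X \otimes Z) \in \mathcal{A}_H$.

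Second, I apply the covariance \eqref{eq:EOH_covariance} with this $W$ in place of $X$. This yields $\pi(g)\,\mathcal{E}_{H,O}(W \otimes Y)\,\pi(g)^\dagger$. Re-folding the definition \eqref{eq:caus_sliced} identifies this with $\pi(g)\,\mathcal{T}_{X,Y}^{(\text{caus})}(Z)\,\pi(g)^\dagger$, which is the right-hand side.

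There is no real obstacle. The only point to check is that both the intermediate conjugation and the outer one use the same unitary $\pi(g)$, and no product $\pi(g)\pi(h)$ ever occurs. Hence the cocycle $\omega$ never enters, and the argument is insensitive to the projective nature of $\pi$. Each condition is used exactly once on a single group element.
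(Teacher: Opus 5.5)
Your proof is correct and follows the same route as the paper. You unfold \eqref{eq:caus_sliced}, apply the equivariance \eqref{eq:Eh_equivariance} of $\mathcal{E}_H$ to the inner term, apply the covariance \eqref{eq:EOH_covariance} of $\mathcal{E}_{H,O}$ with $W=\mathcal{E}_H(X\otimes Z)$, and refold the definition; your remark that only the single unitary $\pi(g)$ ever appears, so the cocycle $\omega$ never enters, is also accurate.
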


\begin{proof}
The proof proceeds analogously, employing the equivariance property \eqref{eq:Eh_equivariance} of \(\mathcal{E}_H\) first, followed by the covariance property \eqref{eq:EOH_covariance} of \(\mathcal{E}_{H,O}\):
\[
\begin{aligned}
&\mathcal{T}_{\pi(g) X \pi(g)^\dagger, \, \rho(g) Y \rho(g)^\dagger}^{(\text{caus})}\big( \pi(g) Z \pi(g)^\dagger \big) \\
&\quad = \mathcal{E}_{H,O}\Big( \mathcal{E}_H\big( \pi(g) X \pi(g)^\dagger \otimes \pi(g) Z \pi(g)^\dagger \big) \otimes \rho(g) Y \rho(g)^\dagger \Big) \\
&\quad \stackrel{{\eqref{eq:Eh_equivariance}}}{=} \mathcal{E}_{H,O}\Big( \pi(g) \mathcal{E}_H(X \otimes Z) \pi(g)^\dagger \otimes \rho(g) Y \rho(g)^\dagger \Big) \\
&\quad \stackrel{{\eqref{eq:EOH_covariance}}}{=} \pi(g) \, \mathcal{E}_{H,O}\big( \mathcal{E}_H(X \otimes Z) \otimes Y \big) \, \pi(g)^\dagger \\
&\quad = \pi(g) \, \mathcal{T}_{X,Y}^{(\text{caus})}(Z) \, \pi(g)^\dagger.
\end{aligned}
\]
\end{proof}

\begin{theorem}\label{thm:global_invariance}
Let \(\Xi = (\phi_0, \mathcal{E}_H, \mathcal{E}_{H,O})\) be an HQMM satisfying the symmetry conditions \eqref{eq:initial_invariance}, \eqref{eq:Eh_equivariance}, and \eqref{eq:EOH_covariance}. Let \(\varphi^{(\text{conv})}\) and \(\varphi^{(\text{caus})}\) be the global states on \(\mathcal{A}_{\mathbb{N}_0}\) generated by the conventional and causal transition maps, respectively, via the construction of Theorem 1. Then for all \(g \in G\), \(A \in \mathcal{A}_{H,\mathbb{N}_0}\), and \(B \in \mathcal{A}_{O,\mathbb{N}_0}\),
\begin{equation}
\varphi^{(\text{conv})}\big( \alpha_g^{(H)}(A) \otimes \beta_g(B) \big) = \varphi^{(\text{conv})}(A \otimes B),
\label{eq:global_conv_invariance}
\end{equation}
\begin{equation}
\varphi^{(\text{caus})}\big( \alpha_g^{(H)}(A) \otimes \beta_g(B) \big) = \varphi^{(\text{caus})}(A \otimes B).
\label{eq:global_caus_invariance}
\end{equation}
In particular, both global states are invariant under the combined projective-linear action of \(G\). Consequently, the hidden marginals \(\varphi_H^{(\text{conv})}\) and \(\varphi_H^{(\text{caus})}\) are invariant under the projective action \(\alpha^{(H)}\), and the observable marginals \(\varphi_O^{(\text{conv})}\) and \(\varphi_O^{(\text{caus})}\) are invariant under the linear action \(\beta\).
\end{theorem}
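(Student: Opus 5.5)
The plan is to reduce global invariance to finite-volume invariance on elementary tensors, and then propagate the covariance of the sliced maps (Lemmas \ref{lem:conv_cov} and \ref{lem:caus_cov}) through the nested formula for the finite-volume states. I will use the construction recalled in Appendix \ref{sect_prel}. There the global state is the projective limit of finite-volume states, and on a localized elementary tensor $A=\bigotimes_{k=0}^n X_k$, $B=\bigotimes_{k=0}^n Y_k$ it has the nested form
\[
\varphi^{(\bullet)}(A\otimes B)=\phi_0\Big(\mathcal{T}^{(\bullet)}_{X_0,Y_0}\big(\mathcal{T}^{(\bullet)}_{X_1,Y_1}\big(\cdots \mathcal{T}^{(\bullet)}_{X_n,Y_n}(\mathbb{I}_{\mathcal{H}})\cdots\big)\big)\Big),
\]
with $\bullet\in\{\text{conv},\text{caus}\}$. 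If the appendix indexing differs (for instance, the hidden operator entering the sliced map at site $k$ being $X_{k}$ or $X_{k+1}$), the argument is unchanged, since each factor is still transformed by the same $\mathrm{Ad}(\pi(g))$.

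The argument proceeds in four steps.

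\textbf{Step 1 (seed).} Note that $\pi(g)\mathbb{I}_{\mathcal{H}}\pi(g)^\dagger=\mathbb{I}_{\mathcal{H}}$, so the innermost argument is invariant. The scalar phase ambiguity of $\pi(g)$ plays no role, because only $\mathrm{Ad}(\pi(g))$ appears.

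\textbf{Step 2 (induction).} Replace $X_k$ by $\pi(g)X_k\pi(g)^\dagger$ and $Y_k$ by $\rho(g)Y_k\rho(g)^\dagger$. Then show by backward induction on $m=n,n-1,\dots,0$ that the partial composition $W_m$ starting at site $m$ transforms as $W_m\mapsto \pi(g)W_m\pi(g)^\dagger$. The base case is Step 1 combined with Lemma \ref{lem:conv_cov} (respectively Lemma \ref{lem:caus_cov}). The inductive step is the same lemma applied with $Z=W_{m+1}$.

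\textbf{Step 3 (initial state).} Apply $\phi_0$ and use \eqref{eq:initial_invariance} to remove the outer conjugation. This gives \eqref{eq:global_conv_invariance} and \eqref{eq:global_caus_invariance} on elementary tensors of any finite support.

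\textbf{Step 4 (extension).} Extend to the whole algebra. Both sides are linear in $A\otimes B$, so the identity holds on the algebraic tensor product of local algebras. The maps $\alpha^{(H)}_g\otimes\beta_g$ are $*$-automorphisms, hence isometric, and $\varphi^{(\bullet)}$ is a state, hence norm-continuous. Density of local elements in the quasi-local algebra $\mathcal{A}_{\mathbb{N}_0}$ then gives the identity everywhere, and compatibility of the projective family ensures the value does not depend on the chosen localization $n$.

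The marginal statements follow by specialization. Taking $B=\mathbb{I}$ gives $\beta_g(\mathbb{I})=\mathbb{I}$, so $\varphi_H^{(\bullet)}\circ\alpha_g^{(H)}=\varphi_H^{(\bullet)}$. Symmetrically, taking $A=\mathbb{I}$ gives invariance of the observable marginals under $\beta$.

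The main obstacle is bookkeeping rather than analysis. One must match the precise recursive formula of Theorem 1 in the appendix, including the site indexing and whether there is a boundary term $\mathcal{E}_H(\cdot)$ or an unconventional final argument, to the sliced maps \eqref{eq:conv_sliced} and \eqref{eq:caus_sliced}. Any boundary map arising there is handled in the same way by \eqref{eq:Eh_equivariance} or \eqref{eq:EOH_covariance}. One should also make sure that the projective cocycle never enters: every appearance of $\pi(g)$ is paired with $\pi(g)^\dagger$ for the same $g$, so $\omega$ cancels identically. Composition laws such as \eqref{eq:proj_composition} are therefore not needed.
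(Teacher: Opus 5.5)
Your proposal is correct and follows essentially the same route as the paper. Both arguments reduce to elementary tensors, seed the recursion with $\pi(g)\mathbb{I}\pi(g)^\dagger=\mathbb{I}$, and propagate the sliced-map covariance of Lemmas \ref{lem:conv_cov} and \ref{lem:caus_cov} through the nested finite-volume formula by induction. They then remove the outer conjugation via \eqref{eq:initial_invariance} and extend by linearity, continuity and projective-limit compatibility. Your backward induction on the partial compositions $W_m$ only reorganizes the paper's induction on chain length, which applies the hypothesis to sites $1,\dots,n+1$ and then the lemma at site $0$. Your specialization $B=\mathbb{I}$ (resp.\ $A=\mathbb{I}$) for the marginals spells out what the paper leaves as ``consequently''.
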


\begin{proof}
We present the proof for the conventional structure; the proof for the causal structure is entirely analogous, with Lemma \ref{lem:caus_cov} replacing Lemma \ref{lem:conv_cov}.

Let \(A = \bigotimes_{k=0}^n X_k \in \mathcal{A}_{H,\mathbb{N}_0}\) and \(B = \bigotimes_{k=0}^n Y_k \in \mathcal{A}_{O,\mathbb{N}_0}\) be elementary tensors supported on sites \(0, \dots, n\). The extension to arbitrary local observables follows by linearity and continuity, and the inductive limit construction ensures the result holds for all quasi-local observables.

Recall from Theorem 1 that the finite-volume state on \(\mathcal{A}_{[0,n]}\) is given by
\begin{equation}
\varphi_n^{(\text{conv})}\!\left( \bigotimes_{k=0}^n (X_k \otimes Y_k) \right) = \phi_0 \circ \mathcal{T}_{X_0,Y_0}^{(\text{conv})} \circ \mathcal{T}_{X_1,Y_1}^{(\text{conv})} \circ \cdots \circ \mathcal{T}_{X_n,Y_n}^{(\text{conv})}(\mathbb{I}_{\mathcal{A}_H}),
\label{eq:finite_volume_conv}
\end{equation}
where the composition denotes successive application of the sliced maps. In what follows, for $(\text{c})\in \{(\text{conv}), (\text{caus}) \}$  we denote the composition of the transition  maps $\mathcal{T}^{(\text{c})} _{X_j,Y_j}$ by

$$
\bigcirc_{k=m}^n \mathcal{T}_{X_k, Y_k} := \mathcal{T}_{X_m,Y_m}^{(\text{c})} \circ \mathcal{T}_{X_1,Y_1}^{(\text{c})} \circ \cdots \circ \mathcal{T}_{X_n,Y_n}^{(\text{c})}
$$

We prove by induction on \(n\) that for all \(g \in G\),
\begin{equation}
\bigcirc_{k=0}^n \mathcal{T}_{\pi(g) X_k \pi(g)^\dagger, \, \rho(g) Y_k \rho(g)^\dagger}^{(\text{conv})}(\mathbb{I}_{\mathcal{A}_H})
= \pi(g) \left( \bigcirc_{k=0}^n \mathcal{T}_{X_k,Y_k}^{(\text{conv})}(\mathbb{I}_{\mathcal{A}_H}) \right) \pi(g)^\dagger.
\label{eq:induction_claim}
\end{equation}

For a single site, Lemma \ref{lem:conv_cov} with \(Z = \mathbb{I}_{\mathcal{A}_H}\) together with the observation that \(\pi(g) \mathbb{I}_{\mathcal{A}_H} \pi(g)^\dagger = \mathbb{I}_{\mathcal{A}_H}\) yields
\[
\mathcal{T}_{\pi(g) X_0 \pi(g)^\dagger, \, \rho(g) Y_0 \rho(g)^\dagger}^{(\text{conv})}(\mathbb{I}_{\mathcal{A}_H})
= \pi(g) \, \mathcal{T}_{X_0,Y_0}^{(\text{conv})}(\mathbb{I}_{\mathcal{A}_H}) \, \pi(g)^\dagger.
\]

Assume the claim holds for chains of length \(n\). For length \(n+1\), define the composition for the last \(n+1\) sites:
\[
\mathcal{C}_n := \bigcirc_{k=1}^{n+1} \mathcal{T}_{X_k,Y_k}^{(\text{conv})}, \qquad
\mathcal{D}_n := \bigcirc_{k=1}^{n+1} \mathcal{T}_{\pi(g) X_k \pi(g)^\dagger, \, \rho(g) Y_k \rho(g)^\dagger}^{(\text{conv})}.
\]
By the induction hypothesis applied to the sites \(1, \dots, n+1\),
\begin{equation}
\mathcal{D}_n(\mathbb{I}_{\mathcal{A}_H}) = \pi(g) \, \mathcal{C}_n(\mathbb{I}_{\mathcal{A}_H}) \, \pi(g)^\dagger.
\label{eq:induction_hyp}
\end{equation}

Now compute the full composition for \(n+1\) sites:
\[
\begin{aligned}
&\bigcirc_{k=0}^{n+1} \mathcal{T}_{\pi(g) X_k \pi(g)^\dagger, \, \rho(g) Y_k \rho(g)^\dagger}^{(\text{conv})}(\mathbb{I}_{\mathcal{A}_H}) \\
&= \mathcal{T}_{\pi(g) X_0 \pi(g)^\dagger, \, \rho(g) Y_0 \rho(g)^\dagger}^{(\text{conv})}\Big( \mathcal{D}_n(\mathbb{I}_{\mathcal{A}_H}) \Big) \\
&= \mathcal{T}_{\pi(g) X_0 \pi(g)^\dagger, \, \rho(g) Y_0 \rho(g)^\dagger}^{(\text{conv})}\Big( \pi(g) \, \mathcal{C}_n(\mathbb{I}_{\mathcal{A}_H}) \, \pi(g)^\dagger \Big) \quad \text{by \eqref{eq:induction_hyp}} \\
&= \pi(g) \, \mathcal{T}_{X_0,Y_0}^{(\text{conv})}\big( \mathcal{C}_n(\mathbb{I}_{\mathcal{A}_H}) \big) \, \pi(g)^\dagger \quad \text{by Lemma \ref{lem:conv_cov}} \\
&= \pi(g) \left( \bigcirc_{k=0}^{n+1} \mathcal{T}_{X_k,Y_k}^{(\text{conv})}(\mathbb{I}_{\mathcal{A}_H}) \right) \pi(g)^\dagger.
\end{aligned}
\]
This completes the induction.

Now apply the initial state \(\phi_0\). By the invariance condition \eqref{eq:initial_invariance},
\[
\begin{aligned}
\varphi_n^{(\text{conv})}\big( \alpha_g^{(H)}(A) \otimes \beta_g(B) \big)
&= \phi_0 \left( \bigcirc_{k=0}^n \mathcal{T}_{\pi(g) X_k \pi(g)^\dagger, \, \rho(g) Y_k \rho(g)^\dagger}^{(\text{conv})}(\mathbb{I}_{\mathcal{A}_H}) \right) \\
&= \phi_0 \left( \pi(g) \left( \bigcirc_{k=0}^n \mathcal{T}_{X_k,Y_k}^{(\text{conv})}(\mathbb{I}_{\mathcal{A}_H}) \right) \pi(g)^\dagger \right) \\
&= \phi_0 \left( \bigcirc_{k=0}^n \mathcal{T}_{X_k,Y_k}^{(\text{conv})}(\mathbb{I}_{\mathcal{A}_H}) \right) \\
&= \varphi_n^{(\text{conv})}(A \otimes B).
\end{aligned}
\]

Since this holds for all finite \(n\) and the embeddings \(\iota_{m,n}\) are compatible with the group actions (i.e., \(\iota_{m,n} \circ (\alpha_g^{(H)} \otimes \beta_g) = (\alpha_g^{(H)} \otimes \beta_g) \circ \iota_{m,n}\)), the projective limit state \(\varphi^{(\text{conv})}\) on \(\mathcal{A}_{\mathbb{N}_0}\) satisfies \eqref{eq:global_conv_invariance}.

\noindent\textit{Proof for the causal structure.} The proof follows exactly the same inductive structure, with \(\mathcal{T}^{(\text{caus})}\) replacing \(\mathcal{T}^{(\text{conv})}\) and Lemma \ref{lem:caus_cov} replacing Lemma \ref{lem:conv_cov}. The finite-volume state is
\[
\varphi_n^{(\text{caus})}\!\left( \bigotimes_{k=0}^n (X_k \otimes Y_k) \right) = \phi_0 \circ \mathcal{T}_{X_0,Y_0}^{(\text{caus})} \circ \mathcal{T}_{X_1,Y_1}^{(\text{caus})} \circ \cdots \circ \mathcal{T}_{X_n,Y_n}^{(\text{caus})}(\mathbb{I}_{\mathcal{A}_H}).
\]
One proves by induction that
\[
\bigcirc_{k=0}^n \mathcal{T}_{\pi(g) X_k \pi(g)^\dagger, \, \rho(g) Y_k \rho(g)^\dagger}^{(\text{caus})}(\mathbb{I}_{\mathcal{A}_H})
= \pi(g) \left( \bigcirc_{k=0}^n \mathcal{T}_{X_k,Y_k}^{(\text{caus})}(\mathbb{I}_{\mathcal{A}_H}) \right) \pi(g)^\dagger,
\]
using Lemma \ref{lem:caus_cov} at each step. Applying \(\phi_0\) and its invariance \eqref{eq:initial_invariance} yields \eqref{eq:global_caus_invariance}.
\end{proof}

The cohomology class \([\omega] \in H^2(G, \mathrm{U}(1))\) thus emerges as a well-defined invariant of the HQMM state. It classifies the projective nature of the hidden representation and, together with the covariance conditions \eqref{eq:Eh_equivariance} and \eqref{eq:EOH_covariance}, provides the algebraic characterization of the SPT order. The emission map \(\mathcal{E}_{H,O}\) serves as a \textit{cocycle intertwiner} that mediates between the projective hidden dynamics and the linear observable dynamics, effectively absorbing the anomaly at the interface.

 \section{Application to AKLT's Symmetry-Protected Topological Order}\label{Sect_AKLT}
The AKLT state \cite{AKLT1987} stands as a paradigmatic example of a  SPT  phase in 1D. Its topological protection originates from a projective representation of the symmetry group on the virtual degrees of freedom—a structure that aligns naturally with the HQMM framework developed in the preceding sections. This projective character is encoded in a nontrivial $2$-cocycle $[\omega] \in H^2(\mathrm{SO}(3), \mathrm{U}(1)) \cong \mathbb{Z}_2$, which obstructs the lifting of the symmetry action on the hidden space to a linear representation. Within the HQMM formalism, this obstruction is precisely absorbed by the emission map $\mathcal{E}_{O,H}$, which acts as a \textit{cocycle intertwiner} mediating between the projective hidden dynamics and the linear observable dynamics.

We now demonstrate how the abstract covariance conditions established earlier—specifically \eqref{eq:Eh_equivariance} and \eqref{eq:EOH_covariance}—manifest concretely in the AKLT construction. This reveals that SPT order is not merely a static property of the ground state but is dynamically preserved under the quantum Markov evolution generated by the HQMM. In doing so, we extend the analysis of topological order in emission transitions introduced in \cite{SouAndhqmm2025} to a more structural and algebraic level, grounding the classification of SPT phases within the operator-algebraic framework of quantum stochastic processes.

Let $G = \mathrm{SO}(3)$ denote the symmetry group. The physical Hilbert space is $\mathcal{K} = \mathcal{H}_1$, on which $G$ acts via the linear spin-$1$ representation $\rho: G \to \mathcal{U}(\mathcal{H}_1)$. The hidden (virtual) Hilbert space is $\mathcal{H} = \mathcal{H}_{1/2}$, where $G$ acts via the irreducible projective spin-$1/2$ representation $\pi: G \to \mathcal{U}(\mathcal{H}_{1/2})$. The projective nature of $\pi$ is captured by a $2$-cocycle $\omega: G \times G \to \mathrm{U}(1)$ satisfying the composition law
\begin{equation}
\pi(g_1)\pi(g_2) = \omega(g_1,g_2)\,\pi(g_1g_2) \qquad \forall g_1,g_2 \in G,
\label{eq:aklt_proj_repr}
\end{equation}
together with the cocycle condition \eqref{eq:cocycle_condition}. The cohomology class $[\omega] \in H^2(\mathrm{SO}(3),\mathrm{U}(1))$ is the generator of $\mathbb{Z}_2$, reflecting the double-cover relation $\mathrm{SU}(2) \to \mathrm{SO}(3)$ \cite{Chen2011b, Ragone24}, and serves as the topological invariant distinguishing the nontrivial SPT phase.

Let $\{|k\rangle\}_{k=+,-,0}$ denote the standard orthonormal basis of $\mathcal{H}_1$ corresponding to the eigenstates of $S^z$ with eigenvalues $+1, -1, 0$, respectively. The AKLT state is represented as a translation-invariant matrix product state with virtual dimension $2$. Its local tensors are linear operators $A_k \in \mathcal{B}(\mathcal{H}_{1/2})$ defined by
\[
A_+ = \frac{1}{\sqrt{2}} \begin{pmatrix} 0 & 1 \\ 0 & 0 \end{pmatrix},\qquad
A_0 = \frac{1}{\sqrt{2}} \begin{pmatrix} 1 & 0 \\ 0 & -1 \end{pmatrix},\qquad
A_- = \frac{1}{\sqrt{2}} \begin{pmatrix} 0 & 0 \\ -1 & 0 \end{pmatrix},
\]
expressed in the standard basis of $\mathcal{H}_{1/2} \cong \mathbb{C}^2$. These operators satisfy the fundamental intertwining relation that encodes the SPT order \cite{Pollmann2010}:
\begin{equation}
\sum_{k'} \rho(g)_{k k'} A_{k'} = \pi(g) A_k \pi(g)^\dagger \qquad \forall g \in G,\; \forall k \in \{+,0,-\},
\label{eq:aklt_covariance}
\end{equation}
where $\rho(g)_{kk'} = \langle k | \rho(g) | k' \rangle$ denotes the matrix element of the physical representation. This relation asserts that a symmetry transformation applied to the physical index is equivalent to conjugation of the virtual operators by the projective representation $\pi(g)$.

The emission map $\mathcal{E}_{H,O}: \mathcal{A}_H \otimes \mathcal{A}_O \to \mathcal{A}_H$ is defined by its action on elementary tensors as
\begin{equation}
\mathcal{E}_{H,O}(X \otimes Y) = \sum_{k,k' \in \{+,0,-\}} \langle k' | Y | k \rangle \, A_k X A_{k'}^\dagger, \qquad X \in \mathcal{A}_H,\; Y \in \mathcal{A}_O.
\label{eq:aklt_emission}
\end{equation}
This map is completely positive and unital. Indeed, positivity follows from the representation $\mathcal{E}_{H,O}(X \otimes Y) = \sum_{k,k'} \langle k'|Y|k\rangle A_k X A_{k'}^\dagger = \sum_{k} A_k X A_k^\dagger$ when $Y = \mathbb{I}_{\mathcal{A}_O}$, and the general case follows from the fact that $Y$ can be diagonalized and the sum over $k,k'$ can be reorganized as a sum over $k$ with appropriate coefficients.

The hidden transition map $\mathcal{E}_H: \mathcal{A}_H \otimes \mathcal{A}_H \to \mathcal{A}_H$ for the AKLT chain is given by the multiplication map
\begin{equation}
\mathcal{E}_H(Z_1 \otimes Z_2) = Z_1 \mathrm{Tr}(Z_2), \qquad Z_1, Z_2 \in \mathcal{A}_H.
\label{eq:aklt_transition}
\end{equation}
This map is completely positive and unital, as it is the $*$-homomorphism induced by the multiplication operation on $\mathcal{B}(\mathcal{H}_{1/2})$.

We now verify that the generative triple $(\phi_0, \mathcal{E}_H, \mathcal{E}_{O,H})$ satisfies the symmetry conditions \eqref{eq:initial_invariance}, \eqref{eq:Eh_equivariance}, and \eqref{eq:EOH_covariance}. The verification proceeds through a sequence of lemmas establishing the symmetry properties at each level of the construction, culminating in Theorem \ref{thm:aklt_global_invariance}. This theorem demonstrates that the global HQMM states $\varphi^{(\text{conv})}$ and $\varphi^{(\text{caus})}$ are invariant under the combined projective-linear action of $\mathrm{SO}(3)$. Together with the nontriviality of the cocycle $[\omega]$, this invariance certifies that the AKLT state realizes a nontrivial SPT phase within the HQMM framework and that this topological protection persists under quantum Markov evolution.

The initial state $\phi_0: \mathcal{A}_H \to \mathbb{C}$ is defined as the vector state associated with the fixed point of the projective representation. Since $\pi$ is irreducible, there exists a unique (up to scalar) vector $\Omega \in \mathcal{H}_{1/2}$ satisfying $\pi(g)\Omega = \lambda(g)\Omega$ for some phase $\lambda(g)$. For the spin-$1/2$ representation, one may take $\Omega$ to be the eigenvector of $\pi(g_z)$ with eigenvalue $1$, which yields a state satisfying
\begin{equation}
\phi_0(\pi(g) Z \pi(g)^\dagger) = \phi_0(Z) \qquad \forall g \in G,\; \forall Z \in \mathcal{A}_H,
\label{eq:aklt_initial_invariance}
\end{equation}
since the phase factors cancel in the expectation value.

\begin{lemma}\label{lem:aklt_emission_covariance}
Let $\pi: G \to \mathcal{U}(\mathcal{H}_{1/2})$ be the projective spin-$1/2$ representation and $\rho: G \to \mathcal{U}(\mathcal{H}_1)$ the linear spin-$1$ representation. The emission map $\mathcal{E}_{H,O}$ defined in \eqref{eq:aklt_emission} satisfies the covariance condition
\begin{equation}
\mathcal{E}_{H,O}\bigl( \pi(g) X \pi(g)^\dagger \otimes \rho(g) Y \rho(g)^\dagger \bigr) = \pi(g) \, \mathcal{E}_{H,O}(X \otimes Y) \, \pi(g)^\dagger
\label{eq:aklt_emission_covariance}
\end{equation}
for all $g \in G$, $X \in \mathcal{A}_H$, and $Y \in \mathcal{A}_O$.
\end{lemma}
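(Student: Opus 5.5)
The plan is a direct computation. We expand both sides of \eqref{eq:aklt_emission_covariance} using the definition \eqref{eq:aklt_emission}. We then move the physical unitary $\rho(g)$ off the observable $Y$ and onto the Kraus operators $A_k$, and finally use the intertwining relation \eqref{eq:aklt_covariance} to convert the resulting linear combinations of the $A_k$ into conjugations by $\pi(g)$. By linearity in $Y$ it suffices to check matrix units $Y=|l'\rangle\langle l|$; for the sketch we keep $Y$ general and write $Y_{l'l}=\langle l'|Y|l\rangle$.

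\emph{Step 1.} Write
\[
\langle k'|\rho(g)Y\rho(g)^\dagger|k\rangle=\sum_{l,l'}\rho(g)_{k'l'}\,Y_{l'l}\,\overline{\rho(g)_{kl}} .
\]
Inserting this into $\mathcal{E}_{H,O}\bigl(\pi(g)X\pi(g)^\dagger\otimes\rho(g)Y\rho(g)^\dagger\bigr)$ and regrouping gives
\[
\sum_{l,l'}Y_{l'l}\Bigl(\sum_k\overline{\rho(g)_{kl}}A_k\Bigr)\pi(g)X\pi(g)^\dagger\Bigl(\sum_{k'}\rho(g)_{k'l'}A_{k'}^\dagger\Bigr).
\]
This reduces the claim to identifying the two bracketed combinations.

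\emph{Step 2.} Apply \eqref{eq:aklt_covariance} with $g$ replaced by $g^{-1}$. Unitarity gives $\rho(g^{-1})=\rho(g)^\dagger$. By \eqref{eq:aklt_proj_repr}, $\pi(g^{-1})=c(g)\,\pi(g)^\dagger$ for some phase $c(g)$, and this phase cancels in the conjugation. Together these yield
\[
\sum_k\overline{\rho(g)_{kl}}A_k=\pi(g)^\dagger A_l\pi(g).
\]
Taking adjoints gives the corresponding identity for $\sum_{k'}\rho(g)_{k'l'}A_{k'}^\dagger$. The point here is that the projective phases always occur in pairs $\omega\,\overline{\omega}$, so only the class $[\omega]$, and not the phases themselves, enters.

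\emph{Step 3, the expected obstacle.} After substitution, the middle factor becomes $\pi(g)^\dagger A_l\pi(g)\,\pi(g)X\pi(g)^\dagger\,\pi(g)^\dagger A_{l'}^\dagger\pi(g)$. The unitaries must telescope to $\pi(g)\,A_lXA_{l'}^\dagger\,\pi(g)^\dagger$, but a naive reading of the index conventions produces $\pi(g)^2$ in the interior instead of cancelling. So the real work is fixing the conventions so that everything is consistent:
\begin{itemize}
\item which index of $\rho(g)_{kk'}$ is contracted in \eqref{eq:aklt_covariance};
\item whether $Y$ enters \eqref{eq:aklt_emission} as $\langle k'|Y|k\rangle$ or through its transpose;
\item on which side of $X$ the operators $A_k$ and $A_{k'}^\dagger$ sit.
\end{itemize}

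What I would try first is to use the version of the intertwining relation that moves $\pi(g)$ across a single $A$: $\sum_{k'}\rho(g)_{kk'}A_{k'}=\pi(g)A_k\pi(g)^\dagger$ rewritten as $\pi(g)A_k=\bigl(\sum\rho A\bigr)\pi(g)$. Applied on each side, this lets the outer conjugation pass through $A_k X A_{k'}^\dagger$ one factor at a time. This uses that in the spin-$1$ Cartesian-type basis $\rho(g)$ can be taken real orthogonal and $\pi(g)\in\mathrm{SU}(2)$, so complex conjugation of matrix elements can be absorbed.

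If the stated ordering resists telescoping, I would verify the identity on the generators $e^{i\theta S^a}$ infinitesimally. In the Lie-algebra form $[S^a_{1/2},A_k]=\sum_{k'}(S^a_1)_{kk'}A_{k'}$, the check becomes a finite matrix identity for the three $A_k$. The finite-$g$ statement then follows by exponentiation and connectedness of $\mathrm{SO}(3)$, with any residual convention mismatch absorbed by redefining $\rho$ as the contragredient, which is equivalent to $\rho$ for spin $1$.
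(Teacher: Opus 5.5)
\textbf{There is a genuine gap: Step 3 is never closed, your proposed repairs do not close it, and for the tensors displayed in the paper it cannot be closed, because the lemma fails for them.}

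\textbf{The Step 3 obstruction is real.} Your Steps 1 and 2 are correct, and the failure to telescope is not a misreading of conventions. Taken literally, \eqref{eq:aklt_emission} and \eqref{eq:aklt_covariance} imply $\pi(g)\,\mathcal{E}_{H,O}(X\otimes Y)\,\pi(g)^\dagger=\mathcal{E}_{H,O}\bigl(\pi(g)X\pi(g)^\dagger\otimes\rho(g)^\dagger Y\rho(g)\bigr)$. So $g$ acts on the hidden side but $g^{-1}$ on the physical side. This is also exactly what your ``move $\pi(g)$ across one factor at a time'' route produces. The paper's proof has the same skeleton as yours. It closes only because it silently transposes twice: it pairs $\langle m|\cdot|m'\rangle$ with $A_m(\cdot)A_{m'}^\dagger$, and it uses \eqref{eq:aklt_covariance} as $\sum_m\rho(g)_{mk}A_m=\pi(g)A_k\pi(g)^\dagger$.

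\textbf{Your repairs do not absorb it.} The mismatch is $g$ versus $g^{-1}$, not complex conjugation. So neither a real Cartesian basis nor the contragredient of $\rho$ (which is still a homomorphism) can absorb it. In fact \eqref{eq:aklt_covariance} as displayed holds for no linear $\rho$. Applying it for $h$ and then for $g$ gives $\sum_{k'}\rho(gh)_{kk'}A_{k'}=\pi(h)\pi(g)A_k\pi(g)^\dagger\pi(h)^\dagger$. The $A_k$ generate $M_2(\mathbb{C})$ and $\operatorname{Ad}\pi$ is faithful on $\mathrm{SO}(3)$, so this forces $gh=hg$. The consistent version is $\sum_{k'}\rho(g)_{kk'}A_{k'}=\pi(g)^\dagger A_k\pi(g)$. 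With it, your Step 2 yields exactly $\sum_k\overline{\rho(g)_{kl}}A_k=\pi(g)A_l\pi(g)^\dagger$, and Step 3 telescopes.

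\textbf{The lemma is false for the displayed tensors.} Even the corrected relation does not save the statement, and your generator check would expose this. Setting $s^+=\sqrt2A_+$, one gets $[s^+,A_-]=-A_0$ and $[s^+,A_0]=-2A_+$. For a spin-$1$ multiplet both ladder coefficients would have modulus $\sqrt2$. A convention-free version of the same point is as follows. Suppose $\sum_{k'}M_{kk'}A_{k'}=UA_kU^\dagger$ with $U$ unitary and $M$ running over an irreducible family of unitaries ($\rho(g)$, $\rho(g)^T$, $\overline{\rho(g)}$, \dots). Then the Gram matrix $\operatorname{Tr}(A_k^\dagger A_{k'})$ commutes with every $\overline{M}$, hence is scalar. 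Here it is $\mathrm{diag}(\tfrac12,1,\tfrac12)$.

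Consequently \eqref{eq:aklt_emission_covariance} itself fails for these $A_k$. The quantity $\operatorname{Tr}\,\mathcal{E}_{H,O}(\mathbb{I}\otimes Y)=\tfrac12Y_{++}+Y_{00}+\tfrac12Y_{--}$ is not invariant under $Y\mapsto\rho(g)Y\rho(g)^\dagger$. Concretely, take $X=\mathbb{I}$, $Y=|0\rangle\langle0|$, and $g$ a quarter turn about the $x$-axis, so that $\langle0|\rho(g)|0\rangle=0$. The left side then has trace $\tfrac12$. The right side is $\pi(g)A_0A_0^\dagger\pi(g)^\dagger=\tfrac12\mathbb{I}$, of trace $1$.

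The paper's proof does not supply the missing ingredient either, since it invokes \eqref{eq:aklt_covariance} for these same tensors. Your computation can only close after two changes. First, replace the tensors by correctly normalized AKLT tensors: all of Hilbert--Schmidt norm $\sqrt{2/3}$, e.g.\ weights $\sqrt{2/3}$ on the off-diagonal matrix units and $\sqrt{1/3}$ on $\mathrm{diag}(1,-1)$. Second, choose labels and index conventions so that these tensors actually satisfy $\sum_k\overline{\rho(g)_{kl}}A_k=\pi(g)A_l\pi(g)^\dagger$, which is precisely what your Step 1 demands.
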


\begin{proof}
Fix $g \in G$, $X \in \mathcal{A}_H$, and $Y \in \mathcal{A}_O$. By definition \eqref{eq:aklt_emission},
\[
\mathcal{E}_{H,O}\bigl( \pi(g) X \pi(g)^\dagger \otimes \rho(g) Y \rho(g)^\dagger \bigr) = \sum_{m,m'} \langle m | \rho(g) Y \rho(g)^\dagger | m' \rangle \, A_m \bigl( \pi(g) X \pi(g)^\dagger \bigr) A_{m'}^\dagger.
\]
Using the resolution of the identity $\sum_{k} |k\rangle\langle k| = \mathbb{I}_{\mathcal{H}_1}$ and the unitarity of $\rho(g)$, we expand the matrix element:
\[
\langle m | \rho(g) Y \rho(g)^\dagger | m' \rangle = \sum_{k,k'} \langle m | \rho(g) | k \rangle \langle k | Y | k' \rangle \langle k' | \rho(g)^\dagger | m' \rangle = \sum_{k,k'} \rho(g)_{mk} \langle k | Y | k' \rangle \overline{\rho(g)_{m'k'}}.
\]
Substituting this expression yields
\[
\mathrm{LHS} = \sum_{k,k'} \langle k | Y | k' \rangle \left( \sum_{m,m'} \rho(g)_{mk} \overline{\rho(g)_{m'k'}} \, A_m \bigl( \pi(g) X \pi(g)^\dagger \bigr) A_{m'}^\dagger \right).
\]
Define $S_{k,k'}$ as the inner sum. Since the sums over $m$ and $m'$ factorize,
\[
S_{k,k'} = \left( \sum_{m} \rho(g)_{mk} A_m \right) \bigl( \pi(g) X \pi(g)^\dagger \bigr) \left( \sum_{m'} \overline{\rho(g)_{m'k'}} A_{m'}^\dagger \right).
\]
Applying the covariance condition \eqref{eq:aklt_covariance} to the first factor gives
\[
\sum_{m} \rho(g)_{mk} A_m = \pi(g) A_k \pi(g)^\dagger.
\]
For the third factor, take the adjoint of \eqref{eq:aklt_covariance} applied to index $k'$:
\[
\sum_{m'} \overline{\rho(g)_{m'k'}} A_{m'}^\dagger = \left( \sum_{m'} \rho(g)_{m'k'} A_{m'} \right)^\dagger = \bigl( \pi(g) A_{k'} \pi(g)^\dagger \bigr)^\dagger = \pi(g) A_{k'}^\dagger \pi(g)^\dagger.
\]
Substituting these expressions into $S_{k,k'}$ and using $\pi(g)^\dagger \pi(g) = \mathbb{I}_{\mathcal{H}_{1/2}}$,
\[
S_{k,k'} = \bigl( \pi(g) A_k \pi(g)^\dagger \bigr) \bigl( \pi(g) X \pi(g)^\dagger \bigr) \bigl( \pi(g) A_{k'}^\dagger \pi(g)^\dagger \bigr) = \pi(g) A_k X A_{k'}^\dagger \pi(g)^\dagger.
\]
Therefore,
\[
\mathrm{LHS} = \sum_{k,k'} \langle k | Y | k' \rangle \, \pi(g) A_k X A_{k'}^\dagger \pi(g)^\dagger = \pi(g) \left( \sum_{k,k'} \langle k | Y | k' \rangle A_k X A_{k'}^\dagger \right) \pi(g)^\dagger = \pi(g) \, \mathcal{E}_{H,O}(X \otimes Y) \, \pi(g)^\dagger,
\]
which completes the proof.
\end{proof}

\begin{lemma}\label{lem:aklt_transition_equivariance}
The hidden transition map $\mathcal{E}_H$ defined in \eqref{eq:aklt_transition} satisfies the equivariance condition
\begin{equation}
\mathcal{E}_H\bigl( \pi(g) Z_1 \pi(g)^\dagger \otimes \pi(g) Z_2 \pi(g)^\dagger \bigr) = \pi(g) \, \mathcal{E}_H(Z_1 \otimes Z_2) \, \pi(g)^\dagger
\label{eq:aklt_transition_equivariance}
\end{equation}
for all $g \in G$ and $Z_1, Z_2 \in \mathcal{A}_H$.
\end{lemma}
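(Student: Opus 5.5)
The verification is a direct computation from the definition \eqref{eq:aklt_transition}, $\mathcal{E}_H(Z_1\otimes Z_2)=Z_1\,\mathrm{Tr}(Z_2)$, using only that $\pi(g)$ is unitary on $\mathcal{H}_{1/2}$. The projective phase $\omega$ never enters, because every occurrence of $\pi(g)$ is paired with $\pi(g)^\dagger$ for the same $g$. I will first check the identity on elementary tensors $Z_1\otimes Z_2$. Since both sides of \eqref{eq:aklt_transition_equivariance} are linear in $Z_1\otimes Z_2$, and elementary tensors span $\mathcal{A}_H\otimes\mathcal{A}_H$ (which is finite dimensional here), the identity then extends to the whole algebra.

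The first step is to apply the definition to the left-hand side:
\[
\mathcal{E}_H\bigl(\pi(g)Z_1\pi(g)^\dagger\otimes\pi(g)Z_2\pi(g)^\dagger\bigr)=\pi(g)Z_1\pi(g)^\dagger\,\mathrm{Tr}\bigl(\pi(g)Z_2\pi(g)^\dagger\bigr).
\]
The second step is to use cyclicity of the trace together with $\pi(g)^\dagger\pi(g)=\mathbb{I}_{\mathcal{H}_{1/2}}$. This gives $\mathrm{Tr}(\pi(g)Z_2\pi(g)^\dagger)=\mathrm{Tr}(Z_2)$, so the second tensor factor is simply invariant. The third step is to pull the scalar $\mathrm{Tr}(Z_2)$ inside the conjugation:
\[
\pi(g)Z_1\pi(g)^\dagger\,\mathrm{Tr}(Z_2)=\pi(g)\bigl(Z_1\mathrm{Tr}(Z_2)\bigr)\pi(g)^\dagger=\pi(g)\,\mathcal{E}_H(Z_1\otimes Z_2)\,\pi(g)^\dagger,
\]
which is exactly \eqref{eq:aklt_transition_equivariance}.

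There is no real obstacle in this computation. The only point needing care is that the argument uses invariance of the unnormalized trace under unitary conjugation and nothing else. It is therefore insensitive to the cocycle, and it would go through unchanged for any projective unitary representation on a finite-dimensional hidden space.
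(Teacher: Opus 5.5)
Your computation is correct for the map as actually displayed in \eqref{eq:aklt_transition}, $\mathcal{E}_H(Z_1\otimes Z_2)=Z_1\,\mathrm{Tr}(Z_2)$. It is not, however, the computation the paper carries out. The paper's proof follows the accompanying phrase ``multiplication map'': it evaluates $\mathcal{E}_H\bigl(\pi(g)Z_1\pi(g)^\dagger\otimes\pi(g)Z_2\pi(g)^\dagger\bigr)$ as the product $\pi(g)Z_1\pi(g)^\dagger\pi(g)Z_2\pi(g)^\dagger$ and cancels the inner $\pi(g)^\dagger\pi(g)$. That proves equivariance of $Z_1\otimes Z_2\mapsto Z_1Z_2$, which is a different map from the one in \eqref{eq:aklt_transition}.

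Both arguments use only unitarity of $\pi(g)$, so in both the cocycle $\omega$ drops out. The real difference is which map gets verified. Your route (trace cyclicity on the second factor, then pulling the scalar through the conjugation) is the one consistent with the stated definition. It also applies verbatim to the normalized partial trace $Z_1\otimes Z_2\mapsto \tfrac12\mathrm{Tr}(Z_2)\,Z_1$. That normalized map is the one actually compatible with the unitality the framework requires, since the unnormalized version sends $\mathbb{I}\otimes\mathbb{I}$ to $2\,\mathbb{I}$, and it matches the introduction's remark that the hidden transition can be taken to be a partial trace.

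The paper's cancellation argument instead verifies a map that is not even positive. For two non-commuting rank-one projections $P,Q$ one has $P\otimes Q\geq 0$, while $PQ$ is not self-adjoint. So that map cannot serve as the CPU hidden transition, and your version is the sound one.
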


\begin{proof}
By definition,
\[
\mathcal{E}_H\bigl( \pi(g) Z_1 \pi(g)^\dagger \otimes \pi(g) Z_2 \pi(g)^\dagger \bigr) = \bigl( \pi(g) Z_1 \pi(g)^\dagger \bigr) \bigl( \pi(g) Z_2 \pi(g)^\dagger \bigr) = \pi(g) Z_1 \pi(g)^\dagger \pi(g) Z_2 \pi(g)^\dagger.
\]
Since $\pi(g)^\dagger \pi(g) = \mathbb{I}_{\mathcal{H}_{1/2}}$,
\[
\pi(g) Z_1 \pi(g)^\dagger \pi(g) Z_2 \pi(g)^\dagger = \pi(g) Z_1 Z_2 \pi(g)^\dagger = \pi(g) \, \mathcal{E}_H(Z_1 \otimes Z_2) \, \pi(g)^\dagger.
\]
\end{proof}
\begin{lemma}\label{lem:aklt_initial_invariance}
Let $\pi: G \to \mathcal{U}(\mathcal{H}_{1/2})$ be the irreducible projective spin-$1/2$ representation of $G = \mathrm{SO}(3)$. A state $\phi_0$ on $\mathcal{A}_H = \mathcal{B}(\mathcal{H}_{1/2})$ is said to be $G$-invariant if it satisfies the invariance condition
\begin{equation}
\phi_0\bigl( \pi(g) Z \pi(g)^\dagger \bigr) = \phi_0(Z) \qquad \forall g \in G,\; \forall Z \in \mathcal{A}_H.
\label{eq:aklt_initial_invariance}
\end{equation}
For the irreducible representation $\pi$, the set of $G$-invariant states consists of a single element, namely the normalized trace
\[
\phi_0(Z) = \frac{1}{\dim(\mathcal{H}_{1/2})} \operatorname{Tr}_{\mathcal{H}_{1/2}}(Z) = \frac{1}{2} \operatorname{Tr}(Z), \qquad Z \in \mathcal{A}_H.
\]
This state is unique and serves as the canonical initial state for the AKLT HQMM.
\end{lemma}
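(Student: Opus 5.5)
The plan has two parts: show that the normalized trace is invariant, and then show that no other state is.

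Invariance is immediate from cyclicity of the trace and unitarity of $\pi(g)$:
\[
\tfrac12\operatorname{Tr}\bigl(\pi(g)Z\pi(g)^\dagger\bigr)=\tfrac12\operatorname{Tr}\bigl(\pi(g)^\dagger\pi(g)Z\bigr)=\tfrac12\operatorname{Tr}(Z).
\]
The projective phase $\omega$ plays no role here, because $\operatorname{Ad}(\pi(g))$ is insensitive to scalar factors.

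For uniqueness I would use the finite-dimensional density matrix picture. Any state on $\mathcal{B}(\mathcal{H}_{1/2})$ has the form $\phi_0(Z)=\operatorname{Tr}(\sigma Z)$ for a unique density matrix $\sigma\ge 0$ with $\operatorname{Tr}\sigma=1$. The invariance condition \eqref{eq:aklt_initial_invariance} then reads
\[
\operatorname{Tr}\bigl(\pi(g)^\dagger\sigma\pi(g)Z\bigr)=\operatorname{Tr}(\sigma Z)\qquad\text{for all } Z.
\]
By nondegeneracy of the trace pairing this gives $\pi(g)^\dagger\sigma\pi(g)=\sigma$, that is, $\sigma$ commutes with every $\pi(g)$.

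The key step is then Schur's lemma for the irreducible projective representation $\pi$. Concretely, I would note that the commutant of $\{\pi(g)\}$ coincides with the commutant of the linear representation $\widetilde\pi$ of the central extension, here $\mathrm{SU}(2)$. This holds because scalar phases do not affect commutation, so the two families have the same commutant. Since the spin-$1/2$ representation of $\mathrm{SU}(2)$ is irreducible, its commutant is $\mathbb{C}\,\mathbb{I}$. Hence $\sigma=c\,\mathbb{I}$, and the normalization $\operatorname{Tr}\sigma=1$ forces $c=\tfrac12$.

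The main point needing care is justifying Schur's lemma in the projective setting. I would handle it either through the lift to $\mathrm{SU}(2)$ just described, or more directly with a generator check. The $\pi(g)$ include $\exp(-i\theta\sigma_j/2)$ for $j=x,y,z$. Commuting with all of these means commuting with the Pauli matrices, which generate $M_2(\mathbb{C})$, so $\sigma$ lies in the center of $M_2(\mathbb{C})$, which is $\mathbb{C}\,\mathbb{I}$.

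Finally, I would remark that this corrects the vector-state description $\Omega$ given before the lemma. No vector state on $\mathbb{C}^2$ is $\mathrm{SO}(3)$-invariant, since a rank-one projection does not commute with all the $\pi(g)$; the invariant initial state is therefore the tracial state.
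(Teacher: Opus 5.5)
Your proposal is correct and follows the paper's proof essentially step for step. You show invariance of $\tfrac12\operatorname{Tr}$ by cyclicity, write an arbitrary invariant state as $\operatorname{Tr}(\sigma\,\cdot)$, deduce that $\sigma$ commutes with every $\pi(g)$, and conclude $\sigma=\tfrac12\mathbb{I}$ by Schur's lemma and normalization. The paper simply invokes Schur's lemma; your justification of it for the projective representation (via the lift to $\mathrm{SU}(2)$ or the Pauli-generator check) is a welcome addition. Your remark that no vector state is invariant is also correct, and it points to an inconsistency with the $\Omega$-based description that precedes the lemma.
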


\begin{proof}
We establish the result in two parts: first, we show that the normalized trace satisfies the invariance condition; second, we prove uniqueness.

\textit{Invariance of the normalized trace.} For any $g \in G$ and any $Z \in \mathcal{A}_H$,
\[
\phi_0\bigl( \pi(g) Z \pi(g)^\dagger \bigr) = \frac{1}{2} \operatorname{Tr}\bigl( \pi(g) Z \pi(g)^\dagger \bigr).
\]
By the cyclicity of the trace, $\operatorname{Tr}(\pi(g) Z \pi(g)^\dagger) = \operatorname{Tr}(Z \pi(g)^\dagger \pi(g))$. Since $\pi(g)$ is unitary, $\pi(g)^\dagger \pi(g) = \mathbb{I}_{\mathcal{H}_{1/2}}$, whence
\[
\operatorname{Tr}(Z \pi(g)^\dagger \pi(g)) = \operatorname{Tr}(Z).
\]
Therefore $\phi_0(\pi(g) Z \pi(g)^\dagger) = \frac{1}{2} \operatorname{Tr}(Z) = \phi_0(Z)$, establishing \eqref{eq:aklt_initial_invariance}.

\textit{Uniqueness.} Let $\phi$ be any state on $\mathcal{B}(\mathcal{H}_{1/2})$ satisfying \eqref{eq:aklt_initial_invariance}. By the Riesz representation theorem for finite-dimensional Hilbert spaces, there exists a unique density operator $\rho \in \mathcal{B}(\mathcal{H}_{1/2})$ such that $\phi(Z) = \operatorname{Tr}(\rho Z)$ for all $Z \in \mathcal{A}_H$, with $\rho \ge 0$ and $\operatorname{Tr}(\rho) = 1$. The invariance condition \eqref{eq:aklt_initial_invariance} implies
\[
\operatorname{Tr}\bigl( \rho \, \pi(g) Z \pi(g)^\dagger \bigr) = \operatorname{Tr}(\rho Z) \qquad \forall g \in G,\; \forall Z \in \mathcal{A}_H.
\]
Using the cyclicity of the trace on the left-hand side,
\[
\operatorname{Tr}\bigl( \pi(g)^\dagger \rho \pi(g) Z \bigr) = \operatorname{Tr}(\rho Z) \qquad \forall g \in G,\; \forall Z \in \mathcal{A}_H.
\]
Since this holds for all $Z$, we obtain
\[
\pi(g)^\dagger \rho \pi(g) = \rho \qquad \forall g \in G,
\]
or equivalently, $\pi(g) \rho = \rho \pi(g)$ for all $g \in G$. Thus $\rho$ belongs to the commutant of the irreducible representation $\pi$, i.e., $\rho \in \pi(G)'$. By Schur's lemma, the commutant of an irreducible representation consists of scalar multiples of the identity: $\pi(G)' = \{\lambda \mathbb{I}_{\mathcal{H}_{1/2}} : \lambda \in \mathbb{C}\}$. Hence $\rho = \lambda \mathbb{I}_{\mathcal{H}_{1/2}}$ for some $\lambda \in \mathbb{C}$. Positivity requires $\lambda \ge 0$, and the normalization condition $\operatorname{Tr}(\rho) = 1$ gives $\lambda \dim(\mathcal{H}_{1/2}) = 1$, i.e., $\lambda = 1/\dim(\mathcal{H}_{1/2}) = 1/2$. Consequently,
\[
\phi(Z) = \operatorname{Tr}\bigl( \tfrac{1}{2} \mathbb{I}_{\mathcal{H}_{1/2}} Z \bigr) = \frac{1}{2} \operatorname{Tr}(Z) = \phi_0(Z),
\]
proving uniqueness.
\end{proof}

\begin{remark}
The uniqueness of the invariant state is a direct consequence of the irreducibility of the projective representation $\pi$. In the AKLT construction, the virtual space $\mathcal{H}_{1/2}$ carries the irreducible spin-$1/2$ representation of $\mathrm{SO}(3)$, which forces the initial state to be the maximally mixed state $\phi_0(Z) = \frac{1}{2}\operatorname{Tr}(Z)$. This state is often interpreted as the infinite-temperature state on the virtual degrees of freedom, and its uniqueness under the symmetry ensures that the resulting HQMM state inherits the topological protection without additional parameters. Were $\pi$ reducible, a convex family of invariant states would exist, potentially allowing for symmetry-preserving deformations that could trivialize the SPT order.
\end{remark}

\begin{theorem}\label{thm:aklt_global_invariance}
Let $\Xi = (\phi_0, \mathcal{E}_H, \mathcal{E}_{H,O})$ be the HQMM defined by \eqref{eq:aklt_initial_invariance}, \eqref{eq:aklt_transition}, and \eqref{eq:aklt_emission}. Let $\varphi^{(\text{c})}$  with $(\text{c})\in\{ (\text{caus}), (\text{conv})\}$, be the global state on $\mathcal{A}_{\mathbb{N}_0}$ generated by $\Xi$ via the conventional causal structure \eqref{eq:conv_sliced} and \eqref{eq:caus_sliced}. Then for all $g \in G$, $A \in \mathcal{A}_{H,\mathbb{N}_0}$, and $B \in \mathcal{A}_{O,\mathbb{N}_0}$,
\begin{equation}
\varphi^{(\text{c})}\bigl( \alpha_g^{(H)}(A) \otimes \beta_g(B) \bigr) = \varphi^{(\text{c})}(A \otimes B),
\label{eq:aklt_global_invariance}
\end{equation}
where $\alpha_g^{(H)} = \operatorname{Ad}(\pi(g))$ and  $\beta_g = \operatorname{Ad}(\rho(g))$.
\end{theorem}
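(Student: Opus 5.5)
The plan is to deduce Theorem \ref{thm:aklt_global_invariance} directly from the general result, Theorem \ref{thm:global_invariance}. That theorem only requires the three symmetry conditions \eqref{eq:initial_invariance}, \eqref{eq:Eh_equivariance} and \eqref{eq:EOH_covariance}, so the whole task is to check them for the AKLT triple $(\phi_0,\mathcal{E}_H,\mathcal{E}_{H,O})$ with $G=\mathrm{SO}(3)$, $\mathcal{H}=\mathcal{H}_{1/2}$, $\mathcal{K}=\mathcal{H}_1$.

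The verification runs in three steps. First, the covariance \eqref{eq:EOH_covariance} of the emission map \eqref{eq:aklt_emission} is exactly Lemma \ref{lem:aklt_emission_covariance}, which rests on the intertwining relation \eqref{eq:aklt_covariance} for the AKLT tensors. Second, the equivariance \eqref{eq:Eh_equivariance} of the hidden transition is Lemma \ref{lem:aklt_transition_equivariance}. Third, invariance of the initial state \eqref{eq:initial_invariance} follows by taking $\phi_0=\frac12\operatorname{Tr}$, which is the unique invariant state by Lemma \ref{lem:aklt_initial_invariance}. With these in hand, one invokes Theorem \ref{thm:global_invariance} for each $(\text{c})\in\{(\text{conv}),(\text{caus})\}$ and obtains \eqref{eq:aklt_global_invariance}.

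Concretely, this means proving the inductive identity \eqref{eq:induction_claim} via Lemmas \ref{lem:conv_cov} and \ref{lem:caus_cov}. One then applies $\phi_0$ to the finite-volume states and passes to the projective limit using compatibility of the embeddings with $\alpha^{(H)}_g\otimes\beta_g$. The extension from elementary tensors to general quasi-local $A\otimes B$ follows by linearity and norm continuity, since $\alpha_g^{(H)}$ and $\beta_g$ are $*$-automorphisms and hence isometric.

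The main obstacle is the hidden transition map \eqref{eq:aklt_transition}. It is written as $Z_1\operatorname{Tr}(Z_2)$, while Lemma \ref{lem:aklt_transition_equivariance} treats it as the product $Z_1Z_2$. I would check equivariance directly for the trace form, where it holds trivially: by cyclicity, $\operatorname{Tr}(\pi(g)Z_2\pi(g)^\dagger)=\operatorname{Tr}(Z_2)$, so
\[
\mathcal{E}_H\bigl(\pi(g)Z_1\pi(g)^\dagger\otimes\pi(g)Z_2\pi(g)^\dagger\bigr)=\pi(g)Z_1\pi(g)^\dagger\operatorname{Tr}(Z_2)=\pi(g)\,\mathcal{E}_H(Z_1\otimes Z_2)\,\pi(g)^\dagger .
\]
The argument therefore works for either reading. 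Note that the trace form is unital only after normalizing to $\frac12\operatorname{Tr}$, and the product form is not completely positive, but neither property affects the invariance argument. I would therefore state explicitly that only the covariance identities enter the proof.

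Similarly, the vector-state description of $\phi_0$ given before Lemma \ref{lem:aklt_initial_invariance} does not in fact give an invariant state. Its ray is not $\pi(G)$-stable, by irreducibility of $\pi$ on $\mathcal{H}_{1/2}$. So I would rely solely on the normalized trace from Lemma \ref{lem:aklt_initial_invariance}.
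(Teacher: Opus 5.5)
Your proposal follows the paper's proof: verify the hypotheses of Theorem~\ref{thm:global_invariance} for the AKLT triple and invoke it. It is slightly more complete, because you also supply the initial-state condition through Lemma~\ref{lem:aklt_initial_invariance}, which the paper's one-line proof leaves implicit, and you resolve the trace/product and vector-state inconsistencies correctly.

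The one ingredient you accept on trust, as the paper does, is Lemma~\ref{lem:aklt_emission_covariance}, and it fails for the data as written. First, \eqref{eq:aklt_covariance}, with $\rho(g)_{kk'}$ summed over $k'$, would force $\rho(gh)=\rho(h)\rho(g)$, which is impossible for the spin-$1$ representation. Second, the lemma's proof silently switches to $\sum_m\rho(g)_{mk}A_m=\pi(g)A_k\pi(g)^\dagger$, and it uses the coefficient $\langle k|Y|k'\rangle$ of $A_kXA_{k'}^\dagger$ instead of the $\langle k'|Y|k\rangle$ in \eqref{eq:aklt_emission}. Third, even the corrected relation fails for the displayed tensors: their squared Hilbert--Schmidt norms are $\tfrac12,1,\tfrac12$, whereas Schur's lemma forces them to be equal. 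Concretely, the one-site observable marginal is $\tfrac14\langle{+}|Y|{+}\rangle+\tfrac12\langle 0|Y|0\rangle+\tfrac14\langle{-}|Y|{-}\rangle$, which is not $\beta$-invariant. So the emission step, and with it the theorem, holds only after these conventions are repaired and the prefactors of $A_+,A_0,A_-$ are replaced by the standard AKLT values $\sqrt{2/3},\,-\sqrt{1/3},\,\sqrt{2/3}$.
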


\begin{proof}
Lemma \ref{lem:aklt_emission_covariance} and Lemma \ref{lem:aklt_transition_equivariance} together verify the hypotheses of Theorem \ref{thm:global_invariance}. The result then follows directly.
\end{proof}

\begin{corollary}\label{cor:aklt_spt}
The global HQMM state $\varphi^{(\text{conv})}$ generated by the AKLT tensors carries a nontrivial SPT order characterized by the projective cocycle $[\omega] \in H^2(\mathrm{SO}(3),\mathrm{U}(1)) \cong \mathbb{Z}_2$. The emission map $\mathcal{E}_{O,H}$ acts as a cocycle intertwiner mediating between the projective hidden dynamics and the linear observable dynamics, thereby preserving the topological protection throughout the quantum Markov evolution.
\end{corollary}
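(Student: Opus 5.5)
The corollary combines two ingredients already established: the global invariance of $\varphi^{(\text{conv})}$ under the combined action $\alpha^{(H)}_g\otimes\beta_g$ (Theorem \ref{thm:aklt_global_invariance}), and the nontriviality of the class $[\omega]$ attached to the hidden spin-$1/2$ representation $\pi$. I would organize the proof in three steps.

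\textbf{Step 1: global invariance.} Lemmas \ref{lem:aklt_emission_covariance}, \ref{lem:aklt_transition_equivariance} and \ref{lem:aklt_initial_invariance} verify the three symmetry conditions \eqref{eq:initial_invariance}, \eqref{eq:Eh_equivariance} and \eqref{eq:EOH_covariance}. Theorem \ref{thm:global_invariance} then gives invariance of $\varphi^{(\text{conv})}$ under $\alpha^{(H)}_g\otimes\beta_g$ for every $g\in\mathrm{SO}(3)$. Restricting to $A=\mathbb{I}$ shows that the observable marginal is invariant under the linear action $\beta$.

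\textbf{Step 2: $[\omega]$ is the generator of $\mathbb{Z}_2$.} I would argue directly rather than cite. Suppose $\omega$ were a coboundary. Then a rescaling $\pi'(g)=\lambda(g)\pi(g)$ would be a genuine unitary representation of $\mathrm{SO}(3)$ on $\mathbb{C}^2$. It would be irreducible, since it has the same commutant as $\pi$. But the irreducible representations of $\mathrm{SO}(3)$ all have odd dimension $2j+1$ with $j$ an integer, so no irreducible representation on $\mathbb{C}^2$ exists. Concretely, the rotation by $2\pi$ about the $z$-axis is the identity in $\mathrm{SO}(3)$, while its spin-$1/2$ lift is $-\mathbb{I}$. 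Hence $[\omega]\neq 0$, and since $H^2(\mathrm{SO}(3),\mathrm{U}(1))\cong\mathbb{Z}_2$, it is the generator.

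\textbf{Step 3: the class is an invariant of $\varphi^{(\text{conv})}$, and $\mathcal{E}_{H,O}$ intertwines.} The key fact is the intertwining relation \eqref{eq:aklt_covariance}: $\rho$ acting on the physical index equals conjugation of the $A_k$ by $\pi$. Here $\pi$ is determined up to phase by Schur's lemma, because the $A_k$ span $\mathcal{B}(\mathcal{H}_{1/2})$, so $\mathcal{E}_{H,O}$ is injective in $X$. Any other projective implementation would then differ from $\pi$ only by phases and therefore define a cohomologous cocycle. The cocycle-intertwiner statement is exactly Lemma \ref{lem:aklt_emission_covariance}, together with the central extension discussion: the phase $\lambda$ cancels in $\operatorname{Ad}\widetilde{\pi}$.

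The main obstacle is making "carries SPT order characterized by $[\omega]$" precise as a property of the state itself. Invariance alone only gives symmetry. I would read the statement in the paper's sense: the class of the projective representation implementing the symmetry on the hidden layer of the generating triple, which Step 3 shows is well defined. Injectivity of $X\mapsto\sum_k A_kXA_k^\dagger$ in the relevant sense, namely that $\{A_k\}$ together with $\mathbb{I}$ generates the full matrix algebra, is what I would check explicitly.
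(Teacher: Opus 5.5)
Your outline is essentially the paper's own argument. The corollary has no separate proof in the paper. It is read off from Theorem~\ref{thm:aklt_global_invariance}, the cited fact that $[\omega]$ generates $\mathbb{Z}_2$ via the double cover $\mathrm{SU}(2)\to\mathrm{SO}(3)$, and Lemma~\ref{lem:aklt_emission_covariance} read through the central-extension remark.

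Your Step~2 usefully replaces that citation by a proof. Note, though, that the odd-dimension classification applies to continuous representations. So either say that a measurable $\lambda$ makes $\pi'=\lambda\pi$ a measurable, hence continuous, homomorphism, or argue algebraically. In the algebraic version, $\det\pi'$ is a character of the perfect group $\mathrm{SO}(3)$, hence trivial. Then $\pi'$ would be a homomorphic section of $\mathrm{SU}(2)\to\mathrm{SO}(3)$. But the rotation by $\pi$ about the $z$-axis has order $2$, while its lifts $\pm i\sigma_z$, with $\sigma_z=\mathrm{diag}(1,-1)$, square to $-\mathbb{I}$. This is the precise form of your ``rotation by $2\pi$'' remark.

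The step that fails as written is in Step~3. Three operators $A_+,A_0,A_-$ cannot span the four-dimensional algebra $\mathcal{B}(\mathcal{H}_{1/2})$; they span only the traceless matrices $\mathfrak{sl}_2$. What Schur's lemma needs is not injectivity of $\mathcal{E}_{H,O}$ in $X$ but a trivial commutant. If $\pi''$ also satisfies \eqref{eq:aklt_covariance}, then $\pi''(g)^\dagger\pi(g)$ commutes with every $A_k$. You then need $\{A_+,A_0,A_-\}'=\mathbb{C}\mathbb{I}$, which holds because $\mathfrak{sl}_2$ has trivial commutant; equivalently, the products $A_kA_{k'}$ span $\mathcal{B}(\mathcal{H}_{1/2})$ (two-site injectivity in the MPS sense).

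Injectivity of the channel $X\mapsto\sum_k A_kXA_k^\dagger$ is a different property and plays no role. With the displayed matrices it actually fails, since $\sigma_z\mapsto 0$. This is a symptom of $A_0$ being mis-normalised relative to $A_\pm$. Indeed, \eqref{eq:aklt_covariance} together with Schur's lemma applied to the Gram matrix $\mathrm{Tr}(A_k^\dagger A_l)$ forces equal Hilbert--Schmidt norms, i.e.\ the standard AKLT weights $\sqrt{2/3},\sqrt{1/3}$.

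You may want uniqueness among all unitaries $U$ satisfying \eqref{eq:EOH_covariance}, since the generating triple only sees $\mathrm{Ad}\,\pi(g)$. In that case, comparing Kraus operators gives only $\pi(g)A_k\pi(g)^\dagger=c\,UA_kU^\dagger$ with a common phase $c$, and you must show $c=1$. For example, $A_+A_-$ is a multiple of a rank-one projection, which forces $c^2=1$, and no unitary anticommutes with all of $\mathfrak{sl}_2$.

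Finally, your caution about a state-level reading is well founded. With \eqref{eq:aklt_transition} one has $\mathcal{T}^{(\text{conv})}_{X,Y}(Z)=\mathrm{Tr}(Z)\,\mathcal{E}_{H,O}(X\otimes Y)$, so $\varphi^{(\text{conv})}$ factorizes over sites and nothing intrinsic to the state can witness $[\omega]$. The corollary can only mean, as you propose, a statement about the class of $\pi$ in the generating triple, and that is all the paper's argument supports.
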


Thus, when viewed as a stationary state of a causal HQMM, the AKLT state exhibits a robust SPT order encoded in the nontrivial cohomology class $[\omega]$. This hierarchical symmetry preservation—from local tensors through the emission map and hidden transition to the global state—reveals how SPT order emerges from the consistent interplay between hidden dynamics and symmetry constraints across all scales of the system \cite{Pollmann2010, Tas23}.

 \section{Discussion and Outlook}\label{Sect_Discussion}

The framework developed in this paper establishes a rigorous algebraic foundation for studying symmetry actions onHQMMs. By formulating the symmetry conditions \eqref{eq:initial_invariance}, \eqref{eq:Eh_equivariance}, and \eqref{eq:EOH_covariance}, we have shown that the projective nature of the hidden representation $\pi: G \to \mathcal{U}(\mathcal{H})$ with associated $2$-cocycle $\omega$ is precisely compensated by the emission map $\mathcal{E}_{O,H}: \mathcal{A}_H \otimes \mathcal{A}_O \to \mathcal{A}_H$, yielding global invariance of the HQMM state under the combined action $(\alpha^{(H)} \otimes \beta)(G)$. Theorem \ref{thm:global_invariance} provides a general criterion for symmetry protection applicable to any HQMM satisfying these conditions, independent of the causal structure.

From a mathematical perspective, the covariance condition \eqref{eq:EOH_covariance} defines a \textit{cocycle intertwiner} between the projective representation $\pi$ on $\mathcal{A}_H$ and the linear representation $\rho$ on $\mathcal{A}_O$, trivializing the obstruction $[\omega] \in H^2(G, \mathrm{U}(1))$ upon passage through the emission map. More generally, one may consider twisted equivariance conditions wherein the hidden transition map $\mathcal{E}_H$ carries its own $2$-cocycle, leading to a refined classification of HQMMs by higher cohomology groups—a perspective aligning with recent work on $C^*$-dynamical systems \cite{Soza19} and adapted unitary cocycles \cite{Vo25}.

From an applicative standpoint, the present framework bears direct relevance to quantum memory and quantum machine learning. HQMMs furnish a natural model for sequential quantum data with latent structure, and the symmetry conditions identified offer a principled means of incorporating physical symmetries into learning algorithms. In particular, the covariance condition ensures that the emission map respects the symmetry, a property essential for learning symmetry-invariant representations. Moreover, the stability of SPT order under quantum Markov evolution suggests that HQMMs equipped with nontrivial cocycles may serve as robust quantum memories.

Several avenues merit further investigation. First, the classification of covariant HQMMs for a given group $G$ and cocycle $[\omega]$ remains open, including the possibility of twisted equivariance. Second, extending the framework to hidden quantum Markov fields on $\mathbb{Z}^d$ would accommodate higher-dimensional causal structures, where the classification of SPT phases involves higher cohomology groups $H^{d+1}(G, \mathrm{U}(1))$ \cite{Chen2011b, Kallel2026, kitaev2006anyons}. Third, the mixing behavior and ergodicity of the two causal structures under symmetry constraints warrant systematic analysis using the tools of $C^*$-dynamical systems \cite{OP87}. Fourth, index-theoretic invariants along the lines of recent work on SPT phases \cite{O21, O23, Tas23} could provide numerical classifications of topological order within the HQMM setting. Finally, symmetry-aware learning algorithms leveraging the covariance conditions hold promise for applications in quantum state tomography and quantum process learning.

\appendix
\section{Appedix: Hidden Quantum Markov Models}\label{sect_prel}

Let $\mathcal{H}$ and $\mathcal{K}$ be separable Hilbert spaces, representing the hidden and observable degrees of freedom at a single lattice site, respectively. The corresponding single-site algebras are defined as the $C^*$-algebras of bounded linear operators: $\mathcal{A}_H := \mathcal{B}(\mathcal{H})$ and $\mathcal{A}_O := \mathcal{B}(\mathcal{K})$, each equipped with the operator norm and containing the identity operator. For each site $n \in \mathbb{Z}$, the full single-site algebra is the spatial $C^*$-tensor product
\[
\mathcal{A}_{\{n\}} := \mathcal{A}_H \otimes \mathcal{A}_O \cong \mathcal{B}(\mathcal{H} \otimes \mathcal{K}),
\]
where the isomorphism holds by separability of $\mathcal{H}$ and $\mathcal{K}$.

For a finite subset $\Lambda \subset \mathbb{Z}$, the finite-volume composite algebra is the $C^*$-tensor product over the sites:
\[
\mathcal{A}_\Lambda := \bigotimes_{n \in \Lambda} \mathcal{A}_{\{n\}} = \bigotimes_{n \in \Lambda} \big( \mathcal{A}_H \otimes \mathcal{A}_O \big)_n,
\]
where $(\mathcal{A}_H \otimes \mathcal{A}_O)_n$ denotes an isomorphic copy of the single-site algebra localized at position $n$. The finite-volume hidden and observable subalgebras are given respectively by
\[
\mathcal{A}_{H,\Lambda} := \bigotimes_{n \in \Lambda} (\mathcal{A}_H)_n, \qquad
\mathcal{A}_{O,\Lambda} := \bigotimes_{n \in \Lambda} (\mathcal{A}_O)_n,
\]
which satisfy the factorization $\mathcal{A}_\Lambda = \mathcal{A}_{H,\Lambda} \otimes \mathcal{A}_{O,\Lambda}$.

For each $n \in \mathbb{Z}$, define the hidden embedding $j_H^{(n)}: \mathcal{A}_H \hookrightarrow \mathcal{A}_{H,\mathbb{Z}}$ by
\[
j_H^{(n)}(X) := \bigotimes_{k \in \mathbb{Z}} X_k^{(H)}, \qquad
X_k^{(H)} = \begin{cases}
X & \text{if } k = n,\\
\mathbb{I}_{\mathcal{A}_H} & \text{otherwise},
\end{cases}
\]
extended continuously to the completion. This map is an isometric $*$-homomorphism. The observable embedding $j_O^{(n)}: \mathcal{A}_O \hookrightarrow \mathcal{A}_{O,\mathbb{Z}}$ is defined analogously as
\[
j_O^{(n)}(Y) := \bigotimes_{k \in \mathbb{Z}} Y_k^{(O)}, \qquad
Y_k^{(O)} = \begin{cases}
Y & \text{if } k = n,\\
\mathbb{I}_{\mathcal{A}_O} & \text{otherwise}.
\end{cases}
\]
The composite embedding at site $n$ is $j^{(n)} := j_H^{(n)} \otimes j_O^{(n)}$, satisfying $j^{(n)}(X \otimes Y) = j_H^{(n)}(X) \cdot j_O^{(n)}(Y)$. For any finite $\Lambda \subset \mathbb{Z}$, the embedding $j_\Lambda: \bigotimes_{n \in \Lambda} (\mathcal{A}_H \otimes \mathcal{A}_O) \hookrightarrow \mathcal{A}_{\mathbb{Z}}$ is defined on elementary tensors by
\[
j_\Lambda\Big( \bigotimes_{n \in \Lambda} (X_n \otimes Y_n) \Big) = \prod_{n \in \Lambda} j^{(n)}(X_n \otimes Y_n),
\]
with the product taken in increasing order of indices. These embeddings are consistent: for $\Lambda \subset \Lambda'$, we have $j_{\Lambda'} \circ \iota_{\Lambda,\Lambda'} = j_\Lambda$, where $\iota_{\Lambda,\Lambda'}$ is the inclusion map tensoring with identities on the additional sites.

The collection $\{\mathcal{A}_\Lambda, \iota_{\Lambda,\Lambda'}\}$ forms a directed system under inclusion. The local algebra is the algebraic direct limit $\mathcal{A}_{\mathbb{Z}}^{\text{loc}} := \bigcup_{\Lambda \subset \mathbb{Z} \text{ finite}} \mathcal{A}_\Lambda$, and the quasi-local $C^*$-algebra is its norm completion $\mathcal{A}_{\mathbb{Z}} := \overline{\mathcal{A}_{\mathbb{Z}}^{\text{loc}}}^{\|\cdot\|}$. When $\dim(\mathcal{H} \otimes \mathcal{K}) < \infty$, $\mathcal{A}_{\mathbb{Z}}$ is a uniformly hyperfinite (UHF) $C^*$-algebra. The hidden and observable quasi-local algebras $\mathcal{A}_{H,\mathbb{Z}}$ and $\mathcal{A}_{O,\mathbb{Z}}$ are constructed analogously, satisfying the spatial tensor product relation $\mathcal{A}_{\mathbb{Z}} \cong \mathcal{A}_{H,\mathbb{Z}} \otimes \mathcal{A}_{O,\mathbb{Z}}$.

The shift automorphism $\sigma \in \operatorname{Aut}(\mathcal{A}_{\mathbb{Z}})$ is defined on elementary tensors by $\sigma\big( \bigotimes_{n \in \mathbb{Z}} A_n \big) = \bigotimes_{n \in \mathbb{Z}} A_{n+1}$, extended continuously. It satisfies $\sigma \circ j^{(n)} = j^{(n+1)}$ for all $n \in \mathbb{Z}$, and $\{\sigma^n\}_{n \in \mathbb{Z}}$ forms a $\mathbb{Z}$-action on $\mathcal{A}_{\mathbb{Z}}$. There exists a strongly continuous one-parameter automorphism group $\tau: \mathbb{R} \to \operatorname{Aut}(\mathcal{A}_{\mathbb{Z}})$ with $\tau_1 = \sigma$, constructed via the crossed product $\mathcal{A}_{\mathbb{Z}} \rtimes_\sigma \mathbb{Z}$, with Liouvillian $H$ satisfying $\tau_t = \exp(i t H) (\cdot) \exp(-i t H)$. The triple $(\mathcal{A}_{\mathbb{Z}}, \mathbb{R}, \tau)$ is a $C^*$-dynamical system.

Restricting to the half-infinite chain indexed by $\mathbb{N}_0 = \{0,1,2,\dots\}$, the one-sided quasi-local algebra is
\[
\mathcal{A}_{\mathbb{N}_0} := \overline{\bigcup_{m \in \mathbb{N}_0} \bigotimes_{n=0}^m (\mathcal{A}_H \otimes \mathcal{A}_O)_n}^{\|\cdot\|},
\]
with the embeddings $j_H^{(n)}$, $j_O^{(n)}$, $j^{(n)}$ restricted to $n \ge 0$. The shift restricts to a non-unital endomorphism $\sigma_+: \mathcal{A}_{\mathbb{N}_0} \to \mathcal{A}_{\mathbb{N}_0}$ given by $\sigma_+(A) = \mathbb{I}_{\{0\}} \otimes A$ under the identification $\mathcal{A}_{\mathbb{N}_0} \cong \mathcal{A}_{\{0\}} \otimes \mathcal{A}_{\mathbb{N}_0}$.

For each $n \in \mathbb{N}_0$, define the finite-volume algebra $\mathcal{A}_{[0,n]} := \bigotimes_{k=0}^n (\mathcal{A}_H \otimes \mathcal{A}_O)_k$, a finite $C^*$-tensor product. For $m \le n$, the natural embedding $\iota_{m,n}: \mathcal{A}_{[0,m]} \hookrightarrow \mathcal{A}_{[0,n]}$ appends $n-m$ identity factors. The family $\{\mathcal{A}_{[0,n]}, \iota_{m,n}\}_{m \le n}$ forms an inductive system, and $\mathcal{A}_{\mathbb{N}_0}$ is the norm completion of its inductive limit \cite{OP87, Partha12}.

Let $\{\varphi_n\}_{n \in \mathbb{N}_0}$ be a sequence where each $\varphi_n$ is a state on $\mathcal{A}_{[0,n]}$. The sequence is said to satisfy the \textit{Kolmogorov compatibility condition} if for all $m \le n$,
\[
\varphi_n \circ \iota_{m,n} = \varphi_m.
\]
This condition guarantees consistency of expectation values across different volumes.

\begin{lemma}[Projective Limit State]
If $\{\varphi_n\}_{n \in \mathbb{N}_0}$ satisfies the Kolmogorov compatibility condition, then there exists a unique state $\varphi$ on $\mathcal{A}_{\mathbb{N}_0}$ such that $\varphi|_{\mathcal{A}_{[0,n]}} = \varphi_n$ for all $n$. Moreover, for any $A \in \mathcal{A}_{\mathbb{N}_0}$, $\varphi(A) = \lim_{n \to \infty} \varphi_n(A_n)$ whenever $A_n \in \mathcal{A}_{[0,n]}$ converges to $A$ in norm.
\end{lemma}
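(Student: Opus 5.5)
The plan is the standard Kolmogorov extension argument for inductive limits of C$^*$-algebras. First, define $\varphi$ on the local algebra $\mathcal{A}^{\text{loc}}_{\mathbb{N}_0}=\bigcup_n \mathcal{A}_{[0,n]}$, identified as the increasing union via the embeddings $\iota_{m,n}$. For $A\in\mathcal{A}_{[0,m]}$ we set $\varphi(A):=\varphi_m(A)$.

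Well-definedness is the first check. If $A$ also lies in $\mathcal{A}_{[0,n]}$ with $m\le n$, it appears there as $\iota_{m,n}(A)$, and the Kolmogorov compatibility condition gives $\varphi_n(\iota_{m,n}(A))=\varphi_m(A)$. Any two indices can be compared through the larger one, so the value does not depend on the chosen volume. Linearity then follows by placing any two local elements in a common $\mathcal{A}_{[0,n]}$, and the same device gives positivity, since $A^*A$ lies in that same $\mathcal{A}_{[0,n]}$. Normalization is $\varphi(\mathbb{I})=\varphi_0(\mathbb{I})=1$, because each $\iota_{m,n}$ is unital.

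Next comes the bound $|\varphi(A)|\le\|A\|$. Each $\varphi_n$ is a state on a unital C$^*$-algebra, so $|\varphi_n(A)|\le\|A\|_{\mathcal{A}_{[0,n]}}$. The embeddings $\iota_{m,n}$ are injective $*$-homomorphisms between C$^*$-algebras, hence isometric, so the norm on the union is consistent with the norm of each $\mathcal{A}_{[0,n]}$. Therefore $\varphi$ is a bounded linear functional of norm $1$ on a dense subspace, and it extends uniquely by continuity to $\mathcal{A}_{\mathbb{N}_0}$. Positivity and normalization pass to the extension, because the limit of positive local elements is approximated appropriately; more simply, a bounded functional with $\|\varphi\|=\varphi(\mathbb{I})=1$ on a unital C$^*$-algebra is automatically a state.

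The limit formula is then immediate. If $A_n\in\mathcal{A}_{[0,n]}$ and $A_n\to A$ in norm, then
\[
|\varphi(A)-\varphi_n(A_n)|=|\varphi(A-A_n)|\le\|A-A_n\|\to0.
\]
Uniqueness holds because any state $\psi$ with $\psi|_{\mathcal{A}_{[0,n]}}=\varphi_n$ agrees with $\varphi$ on the dense local algebra, and both are continuous, so they coincide. The only point needing real care is the isometry of the embeddings and the resulting consistency of norms on the union. This rests on the injectivity of $\iota_{m,n}$, which holds because $X\mapsto X\otimes\mathbb{I}$ is injective for spatial tensor products. I expect this to be the main, though mild, obstacle; everything else is routine.
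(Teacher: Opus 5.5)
Your proposal is correct and follows essentially the same route as the paper: define $\varphi$ on the local algebra $\bigcup_n \mathcal{A}_{[0,n]}$ by $\varphi(A)=\varphi_n(A)$, use Kolmogorov compatibility for well-definedness, and extend by continuity. You also make explicit what the paper leaves implicit: the bound $|\varphi(A)|\le\|A\|$ from the isometric embeddings $\iota_{m,n}$, positivity of the extension via $\|\varphi\|=\varphi(\mathbb{I})=1$, uniqueness by density, and the estimate $|\varphi(A)-\varphi_n(A_n)|\le\|A-A_n\|$ for the limit formula.
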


\begin{proof}
For any local observable $A$ belonging to some $\mathcal{A}_{[0,n]}$, set $\varphi(A) = \varphi_n(A)$. Compatibility makes this well-defined. This gives a linear functional on the dense subalgebra of local observables, which is positive and satisfies $\varphi(\mathbb{I}) = 1$. Hence it extends uniquely to a state on the whole algebra $\mathcal{A}_{\mathbb{N}_0}$ by continuity. The approximation property follows directly from the construction.
\end{proof}

The HQMM structure is encoded in a completely positive unital (CPU) map
\[
\mathcal{T}: \mathcal{A}_H \otimes \mathcal{A}_H \otimes \mathcal{A}_O \to \mathcal{A}_H,
\]
which governs the joint evolution of hidden and observable degrees of freedom across adjacent sites. For any $X \in \mathcal{A}_H$ and $Y \in \mathcal{A}_O$, define the sliced maps $\mathcal{T}_{X,Y}: \mathcal{A}_H \to \mathcal{A}_H$ by
\[
\mathcal{T}_{X,Y}(Z) := \mathcal{T}(X \otimes Z \otimes Y).
\]
Given an initial state $\phi_0$ on $\mathcal{A}_H$, we construct finite-volume states as follows. For each $n \in \mathbb{N}_0$, define $\varphi_n$ on $\mathcal{A}_{[0,n]}$ on elementary tensors by
\[
\varphi_n\!\left( \bigotimes_{k=0}^n (X_k \otimes Y_k) \right) := \phi_0 \circ \mathcal{T}_{X_0,Y_0} \circ \mathcal{T}_{X_1,Y_1} \circ \cdots \circ \mathcal{T}_{X_n,Y_n}(\mathbb{I}_{\mathcal{A}_H}),
\]
where the composition denotes successive application of the maps, and $\phi_0$ is evaluated on the resulting element of $\mathcal{A}_H$.

\begin{theorem}[Global HQMM State]
Let $\mathcal{T}: \mathcal{A}_H \otimes \mathcal{A}_H \otimes \mathcal{A}_O \to \mathcal{A}_H$ be a CPU map and $\phi_0$ a state on $\mathcal{A}_H$. For each $n \in \mathbb{N}_0$, the functional $\varphi_n$ defined above extends uniquely to a state on $\mathcal{A}_{[0,n]}$. The sequence $\{\varphi_n\}_{n \in \mathbb{N}_0}$ satisfies the Kolmogorov compatibility condition, and thus there exists a unique state $\varphi$ on $\mathcal{A}_{\mathbb{N}_0}$ such that $\varphi|_{\mathcal{A}_{[0,n]}} = \varphi_n$ for all $n$. This $\varphi$ is called the \textit{global HQMM state} generated by $(\phi_0, \mathcal{T})$.
\end{theorem}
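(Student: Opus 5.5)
The plan is to realize each finite-volume functional $\varphi_n$ as $\phi_0$ composed with a single CPU map $\Phi_n:\mathcal{A}_{[0,n]}\to\mathcal{A}_H$. Positivity and normalization then come for free. Kolmogorov compatibility should reduce to the unitality of $\mathcal{T}$, and the global state will follow from the Projective Limit State lemma.

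\textbf{Step 1 (finite-volume maps).} I will define $\Phi_n$ recursively. Put $\Phi_0(X\otimes Y):=\mathcal{T}(X\otimes \mathbb{I}_{\mathcal{A}_H}\otimes Y)$. Using $\mathcal{A}_{[0,n]}\cong(\mathcal{A}_H\otimes\mathcal{A}_O)_0\otimes\mathcal{A}_{[1,n]}$ and $\mathcal{A}_{[1,n]}\cong\mathcal{A}_{[0,n-1]}$, set
\[
\Phi_n := \mathcal{T}\circ\theta\circ\bigl(\mathrm{id}_{\mathcal{A}_H}\otimes\mathrm{id}_{\mathcal{A}_O}\otimes\Phi_{n-1}\bigr),
\]
where $\theta:\mathcal{A}_H\otimes\mathcal{A}_O\otimes\mathcal{A}_H\to\mathcal{A}_H\otimes\mathcal{A}_H\otimes\mathcal{A}_O$ is the flip of tensor factors. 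On elementary tensors this gives
\[
\Phi_n\Bigl(\bigotimes_{k=0}^n(X_k\otimes Y_k)\Bigr)=\mathcal{T}_{X_0,Y_0}\circ\cdots\circ\mathcal{T}_{X_n,Y_n}(\mathbb{I}_{\mathcal{A}_H}),
\]
so $\varphi_n=\phi_0\circ\Phi_n$ on elementary tensors.

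\textbf{Step 2 (CPU property of $\Phi_n$).} By induction, $\Phi_n$ is CPU, using three standard facts:
\begin{itemize}
\item the flip $\theta$ is a $*$-isomorphism of the spatial tensor product;
\item compositions of CPU maps are CPU;
\item for CPU maps $S,R$, the map $S\otimes R$ extends uniquely to a CPU map on the spatial (minimal) $C^*$-tensor product.
\end{itemize}
Unitality holds because $\Phi_{n-1}(\mathbb{I})=\mathbb{I}$ and $\mathcal{T}(\mathbb{I}\otimes\mathbb{I}\otimes\mathbb{I})=\mathbb{I}$. It follows that $\varphi_n=\phi_0\circ\Phi_n$ is a state, and it is the unique continuous linear extension of the defining formula, since elementary tensors span a dense subspace.

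\textbf{Step 3 (compatibility).} The embedding $\iota_{n,n+1}$ appends $\mathbb{I}\otimes\mathbb{I}$ at site $n+1$. The innermost factor then becomes $\mathcal{T}_{\mathbb{I},\mathbb{I}}(\mathbb{I})=\mathcal{T}(\mathbb{I}\otimes\mathbb{I}\otimes\mathbb{I})=\mathbb{I}$, so $\varphi_{n+1}\circ\iota_{n,n+1}=\varphi_n$ on elementary tensors. By linearity and continuity this holds everywhere, and iterating gives $\varphi_n\circ\iota_{m,n}=\varphi_m$ for all $m\le n$.

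\textbf{Step 4 (global state).} The Projective Limit State lemma now yields the unique state $\varphi$ on $\mathcal{A}_{\mathbb{N}_0}$ with $\varphi|_{\mathcal{A}_{[0,n]}}=\varphi_n$.

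The main obstacle I expect is Step 2 in the possibly infinite-dimensional setting. The issue is justifying that the multilinear formula extends from the algebraic tensor product to a positive map on the $C^*$-tensor product. I will handle this by invoking the fact that the tensor product of completely positive maps is bounded, indeed CP, on the minimal tensor norm, which matches the spatial tensor product used in the appendix. In finite dimensions this step is automatic.
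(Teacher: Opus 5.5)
Your proof is correct. It follows the paper's outline (state property, compatibility via unitality, projective-limit lemma), but it handles the one nontrivial step, positivity of $\varphi_n$ on all of $\mathcal{A}_{[0,n]}$, differently and more soundly.

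\textbf{Paper's route.} The paper observes that each slice $\mathcal{T}_{X,Y}$ is CP when $X,Y\ge 0$. It concludes that $\varphi_n\bigl(\bigotimes_{k}(X_k\otimes Y_k)\bigr)\ge 0$ for positive $X_k,Y_k$. It then appeals to a ``universal property of the $C^*$-tensor product'' to obtain a state. This is shorter, but nonnegativity on products of positive elements does not imply positivity on a tensor product. For instance, with $F$ the flip on $\mathbb{C}^2\otimes\mathbb{C}^2$, the functional $A\mapsto\frac12\mathrm{Tr}(FA)$ equals $\frac12\mathrm{Tr}(PQ)\ge 0$ on $P\otimes Q$, yet it is negative on the singlet projection. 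So that step genuinely needs an argument like yours.

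\textbf{Your route.} You write $\varphi_n=\phi_0\circ\Phi_n$, where $\Phi_n=\mathcal{T}\circ\theta\circ(\mathrm{id}\otimes\mathrm{id}\otimes\Phi_{n-1})$ is a CPU map on the full spatial tensor product. This uses complete positivity of $\mathcal{T}$ essentially, since it is what makes $\mathrm{id}\otimes\Phi_{n-1}$ positive. It also gives well-definedness on the algebraic tensor product, boundedness, and uniqueness of the extension by density, all of which the paper leaves implicit.

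\textbf{Remaining steps.} Your compatibility step and your use of the projective-limit lemma match the paper's. You state the key identity precisely as $\mathcal{T}_{\mathbb{I},\mathbb{I}}(\mathbb{I})=\mathcal{T}(\mathbb{I}\otimes\mathbb{I}\otimes\mathbb{I})=\mathbb{I}$. To make the base case of your induction explicit, note that $\Phi_0=\mathcal{T}\circ\kappa$, where $\kappa(X\otimes Y)=X\otimes\mathbb{I}\otimes Y$ is a unital $*$-homomorphism, so $\Phi_0$ is CPU.
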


\begin{proof}
Unitality and complete positivity of $\mathcal{T}$ imply that each $\mathcal{T}_{X,Y}$ is a CP map for fixed positive elements $X$ and $Y$,  and if $X =\mathbb{I}_{\mathcal{A}_H}$ and $Y= \mathbb{I}_{\mathcal{A}_O}$ the map $Z \mapsto \mathcal{T}_{\mathbb{I}_{\mathcal{A}_H},\mathbb{I}_{\mathcal{A}_H}}(Z) = \mathcal{T}(\mathbb{I}_{\mathcal{A}_H}\otimes Z\otimes \mathbb{I}_{\mathcal{A}_O})$ is CPU . Hence  for a given sequence of positive elements $X_i,Y_i, i =0,1,\cdots,n$, the map $\phi_0 \circ \mathcal{T}_{X_0,Y_0} \circ \cdots \circ \mathcal{T}_{X_n,Y_n}(\cdot)$ is a positive linear functional on $\mathcal{A}_H$. Moreover, $\varphi_n(\mathbb{I}) = 1$. By the universal property of the $C^*$-tensor product, $\varphi_n$ extends to a state on $\mathcal{A}_{[0,n]}$. For compatibility, consider $A \in \mathcal{A}_{[0,n]}$ and its embedding $\iota_{n,n+1}(A) = A \otimes (\mathbb{I}_{\mathcal{A}_H} \otimes \mathbb{I}_{\mathcal{A}_O})$. Then
\[
\varphi_{n+1}(\iota_{n,n+1}(A)) = \phi_0 \circ \mathcal{T}_{X_0,Y_0} \circ \cdots \circ \mathcal{T}_{X_n,Y_n} \circ \mathcal{T}_{\mathbb{I}_{\mathcal{A}_H},\mathbb{I}_{\mathcal{A}_O}}(\mathbb{I}_{\mathcal{A}_H}).
\]
Unitality of $\mathcal{T}$ gives $\mathcal{T}_{\mathbb{I}_{\mathcal{A}_H},\mathbb{I}_{\mathcal{A}_O}} = \mathbb{I}_{\mathcal{A}_H}$, so the right-hand side equals $\phi_0 \circ \mathcal{T}_{X_0,Y_0} \circ \cdots \circ \mathcal{T}_{X_n,Y_n}(\mathbb{I}_{\mathcal{A}_H}) = \varphi_n(A)$. By linearity and continuity, $\varphi_{n+1} \circ \iota_{n,n+1} = \varphi_n$, and similarly for all $m \le n$. The existence and uniqueness of $\varphi$ follow from Lemma 1.
\end{proof}

The hidden marginal state $\varphi_H$ on $\mathcal{A}_{H,\mathbb{N}_0}$ is defined by restriction: for $A \in \mathcal{A}_{H,\mathbb{N}_0}$, $\varphi_H(A) := \varphi(A \otimes \mathbb{I}_{\mathcal{A}_{O,\mathbb{N}_0}})$. For a local observable $\bigotimes_{k=0}^n X_k$, this becomes
\[
\varphi_H\!\left( \bigotimes_{k=0}^n X_k \right) = \phi_0 \circ \mathcal{T}_{X_0,\mathbb{I}_{\mathcal{A}_O}} \circ \cdots \circ \mathcal{T}_{X_n,\mathbb{I}_{\mathcal{A}_O}}(\mathbb{I}_{\mathcal{A}_H}),
\]
defining a quantum Markov chain on $\mathcal{A}_{H,\mathbb{N}_0}$. The observation marginal $\varphi_O$ on $\mathcal{A}_{O,\mathbb{N}_0}$ is defined analogously: for $B \in \mathcal{A}_{O,\mathbb{N}_0}$, $\varphi_O(B) := \varphi(\mathbb{I}_{\mathcal{A}_{H,\mathbb{N}_0}} \otimes B)$. For a local observable $\bigotimes_{k=0}^n Y_k$,
\[
\varphi_O\!\left( \bigotimes_{k=0}^n Y_k \right) = \phi_0 \circ \mathcal{T}_{\mathbb{I}_{\mathcal{A}_H},Y_0} \circ \cdots \circ \mathcal{T}_{\mathbb{I}_{\mathcal{A}_H},Y_n}(\mathbb{I}_{\mathcal{A}_H}),
\]
which represents the statistics of the observed process. The global state $\varphi$ serves as a coupling of these two marginals.

Two distinct causal structures arise as special realizations of $\mathcal{T}$, corresponding to different orderings of hidden transition and observable emission. Let $\mathcal{E}_H: \mathcal{A}_H \otimes \mathcal{A}_H \to \mathcal{A}_H$ be a CPU map describing the hidden transition, and $\mathcal{E}_{H,O}: \mathcal{A}_H \otimes \mathcal{A}_O \to \mathcal{A}_H$ a CPU map describing the emission. The \textit{conventional} causal structure is

\begin{equation}\label{Tconv}
\mathcal{T}^{(\text{conv})}(X \otimes X' \otimes Y) = \mathcal{E}_H\big( \mathcal{E}_{H,O}(X \otimes Y) \otimes X' \big),
\end{equation}
where emission acts first on the current hidden state $X$ and observable $Y$, followed by the transition that propagates the result together with the subsequent hidden state $X'$. The \textit{causal} causal structure is
\begin{equation}\label{Tcaus}
\mathcal{T}^{(\text{caus})}(X \otimes X' \otimes Y) = \mathcal{E}_{H,O}\big( \mathcal{E}_H(X \otimes X') \otimes Y \big).
\end{equation}
where the hidden transition acts first on the current hidden state $X$ and the subsequent hidden state $X'$, followed by emission coupling the evolved state with the observable $Y$. Both reduce to the same classical hidden Markov model when all algebras commute, but they yield distinct quantum correlation structures.

This $C^*$-dynamical framework provides the rigorous foundation for analyzing symmetry-protected topological order in HQMMs. The quasi-local algebra $\mathcal{A}_{\mathbb{N}_0}$, equipped with the shift endomorphism $\sigma_+$ and the embedding structure $j^{(n)}$, supports the definition of the generative triple $(\phi_0, \mathcal{E}_H, \mathcal{E}_{H,O})$ and the two causal structures $\mathcal{T}^{(\text{conv})}$ and $\mathcal{T}^{(\text{caus})}$ as particular instances.

\section*{Declarations}

\subsection*{Conflict of Interest}
The authors declare no conflict of interest.

\subsection*{Data Availability}
No data were generated or analyzed during this study.

\subsection*{CRediT authorship contribution statement}

Abdessatar Souissi: Conceptualization, Investigation, Resources, Methodology, Writing – original draft, Writing – review \& editing.

Abdessatar Barhoumi: Visualization, Project administration, Methodology, Resources, Project administration.

\subsection*{Data availability}

No data was used for the research described in the article.

\subsection*{Funding}
The authors received no funding for this work.


\begin{thebibliography}{99}


\bibitem{H83}
Haldane, F. D. M.:
Nonlinear field theory of large-spin Heisenberg antiferromagnets:
semiclassically quantized solitons of the one-dimensional easy-axis Néel state.
\textit{Physical Review Letters} \textbf{50}(15), 1153 (1983).

\bibitem{Pollmann2010}
Pollmann, F., Turner, A. M., Berg, E., Oshikawa, M.:
Entanglement spectrum of a topological phase in one dimension.
\textit{Physical Review B} \textbf{81}, 064439 (2010).

\bibitem{Pollmann2012}
Pollmann, F., Berg, E., Turner, A. M., Oshikawa, M.:
Symmetry protection of topological order in one-dimensional quantum spin systems.
\textit{Physical Review B} \textbf{85}, 075125 (2012).

\bibitem{AKLT1987}
Affleck, I., Kennedy, T., Lieb, E. H., Tasaki, H.:
Rigorous results on valence-bond ground states in antiferromagnets.
\textit{Physical Review Letters} \textbf{59}(7), 799 (1987).

\bibitem{Affleck1988}
Affleck, I., Kennedy, T., Lieb, E. H., Tasaki, H.:
Valence bond ground states in isotropic quantum antiferromagnets.
\textit{Communications in Mathematical Physics} \textbf{115}(3), 477–528 (1988).

\bibitem{O21}
Ogata, Y.:
A $\mathbb{Z}_2$-index of symmetry protected topological phases with time reversal symmetry for quantum spin chains.
\textit{Communications in Mathematical Physics} \textbf{374}(2), 705–734 (2020).

\bibitem{O23}
Ogata, Y.:
Classification of gapped ground state phases in quantum spin systems.
In: \textit{International Congress of Mathematicians}, pp. 4142–4161.
European Mathematical Society, Helsinki (2023).

\bibitem{Tas23}
Tasaki, H.:
Rigorous index theory for one-dimensional interacting topological insulators.
\textit{Journal of Mathematical Physics} \textbf{64}, 041903 (2023).

\bibitem{Chen2011a}
Chen, X., Gu, Z.-C., Wen, X.-G.:
Classification of gapped symmetric phases in one-dimensional spin systems.
\textit{Physical Review B} \textbf{83}, 035107 (2011).

\bibitem{Chen2011b}
Chen, X., Gu, Z.-C., Wen, X.-G.:
Complete classification of one-dimensional gapped quantum phases in interacting spin systems.
\textit{Physical Review B} \textbf{84}, 235128 (2011).

\bibitem{D17}
Stephen, D. T., Wang, D.-S., Prakash, A., Wei, T.-C., Raussendorf, R.:
Computational power of symmetry-protected topological phases.
\textit{Physical Review Letters} \textbf{119}, 010504 (2017).

\bibitem{GTS22}
de Groot, C., Turzillo, A., Schuch, N.:
Symmetry protected topological order in open quantum systems.
\textit{Quantum} \textbf{6}, 856 (2022).

\bibitem{D20}
de Groot, C., Stephen, D. T., Molnar, A., Schuch, N.:
Inaccessible entanglement in symmetry protected topological phases.
\textit{Journal of Physics A: Mathematical and Theoretical} \textbf{53}(33), 335302 (2020).

\bibitem{Sati26}
Sati, H., Schreiber, U.:
Identifying anyonic topological order in fractional quantum anomalous Hall systems.
\textit{Applied Physics Letters} \textbf{128}(2) (2026).

\bibitem{AGLS24Q}
Accardi, L., Soueidi, E. G., Lu, Y. G., Souissi, A.:
Hidden quantum Markov processes.
\textit{Infinite Dimensional Analysis, Quantum Probability and Related Topics}
\textbf{28}(04), 2450007 (2025).

\bibitem{AGLS24C}
Accardi, L., Soueidi, E. G., Lu, Y. G., Souissi, A.:
Algebraic hidden processes and hidden Markov processes.
\textit{Infinite Dimensional Analysis, Quantum Probability and Related Topics}
\textbf{29}(01), 2450009 (2026).

\bibitem{Partha12}
Parthasarathy, K. R.:
\textit{An Introduction to Quantum Stochastic Calculus}.
Springer, Berlin (2012).

\bibitem{choi1975completely}
Choi, M.-D.:
Completely positive linear maps on complex matrices.
\textit{Linear Algebra and Its Applications} \textbf{10}(3), 285–290 (1975).

\bibitem{OP87}
Bratteli, O., Robinson, D. W.:
\textit{Operator Algebras and Quantum Statistical Mechanics}, Vols. 1 and 2, 2nd edn.
Springer, Berlin (1987).

\bibitem{ET}
Taranto, P., Elliott, T. J., Milz, S.:
Hidden quantum memory: Is memory there when somebody looks?
\textit{Quantum} \textbf{7}, 991 (2023).

\bibitem{LiXi24}
Li, X.-Y., et al.:
A new quantum machine learning algorithm: split hidden quantum Markov model inspired by quantum conditional master equation.
\textit{Quantum} \textbf{8}, 1232 (2024).

\bibitem{Vo25}
Voigt, F. R.:
Adapted unitary cocycles and quantum Markov processes: a cohomological approach.
Doctoral dissertation (2025).

\bibitem{fannes1992finitely}
Fannes, M., Nachtergaele, B., Werner, R. F.:
Finitely correlated states of quantum spin chains.
\textit{Communications in Mathematical Physics} \textbf{144}, 443–490 (1992).

\bibitem{fannes2}
Fannes, M., Nachtergaele, B., Werner, R. F.:
Ground states of VBS models on Cayley trees.
\textit{Journal of Statistical Physics} \textbf{66}, 939–973 (1992).

\bibitem{SouAndhqmm2025}
Souissi, A., Andolsi, A.:
A hidden quantum Markov model framework for entanglement and topological order in the AKLT chain.
\textit{The European Physical Journal Plus} \textbf{141}(4), 404 (2026).

\bibitem{Ragone24}
Ragone, M.:
SO($n$) AKLT chains as symmetry protected topological quantum ground states.
Ph.D. thesis, University of California, Davis (2024).

\bibitem{Xi18}
Xiong, C. Z.:
Minimalist approach to the classification of symmetry protected topological phases.
\textit{Journal of Physics A: Mathematical and Theoretical} \textbf{51}(44), 445001 (2018).

\bibitem{Soza19}
Szabó, G.:
On a categorical framework for classifying C$^*$-dynamics up to cocycle conjugacy.
\textit{Journal of Functional Analysis} \textbf{280}, 108927 (2019).

\bibitem{SouBarhqmm2026}
Souissi, A., Barhoumi, A.:
Causal architecture in hidden quantum Markov models.
arXiv:2602.19120 (2026).


\bibitem{smith19}
Smith, A., Kim, M. S., Pollmann, F., Knolle, J.:
Simulating quantum many-body dynamics on a current digital quantum computer.
\textit{npj Quantum Information} \textbf{5}, 106 (2019).

\bibitem{Kallel2026}
Kallel, S., Sati, H., Schreiber, U.:
Higher-Dimensional Anyons via Higher Cohomotopy.
arXiv:2601.03150 (2026).

\bibitem{kitaev2006anyons}
Kitaev, A.:
Anyons in an exactly solved model and beyond.
\textit{Annals of Physics} \textbf{321}(1), 2–111 (2006).

\end{thebibliography}
\end{document}